\newtheorem{observation}{Observation}
\newcounter{alg}
\newcommand{\alg}[1]{\refstepcounter{alg}\label{#1}}
\newcommand{\PPAD}{\ensuremath{\mathtt{PPAD}}\xspace}
\newcommand{\reals}{\mathbb{R}}
\newcommand{\eps}{\ensuremath{\epsilon}\xspace}
\newcommand{\profx}{\ensuremath{\mathbf{x}}\xspace}
\newcommand{\prof}{\ensuremath{\mathbf{x}}\xspace}
\newcommand{\profy}{\ensuremath{\mathbf{y}}\xspace}
\newcommand{\profu}{\ensuremath{\mathbf{u}}\xspace}
\newcommand{\bp}{\ensuremath{\mathbf{p}}\xspace}
\newcommand{\bq}{\ensuremath{\mathbf{q}}\xspace}
\newcommand{\bbb}{\ensuremath{\mathfrak{b}}\xspace}
\newcommand{\ff}{\ensuremath{\mathfrak{f}}\xspace}
\newcommand{\lgame}{\ensuremath{\mathfrak{L}}\xspace}
\newcommand{\biased}{\ensuremath{\mathcal{B}}\xspace}
\newcommand{\regret}{\ensuremath{\mathcal{R}}\xspace}
\newcommand{\calh}{\ensuremath{\mathcal{H}}\xspace}
\newcommand{\calm}{\ensuremath{\mathcal{M}}\xspace}
\newcommand{\call}{\ensuremath{\mathcal{L}}\xspace}
\newcommand{\calp}{\ensuremath{\mathcal{P}}\xspace}
\newcommand{\calt}{\ensuremath{\mathcal{T}}\xspace}
\newcommand{\plip}{\ensuremath{\mathcal{P_{\lambda}}}\xspace}
\newcommand{\bmax}{\ensuremath{p_{\max}}\xspace}
\newcommand{\jj}{\ensuremath{\mathfrak{j}}\xspace}
\begin{document}

\title{Lipschitz Continuity and Approximate Equilibria
}
\author{
ARGYRIOS DELIGKAS
\affil{University of Liverpool, UK}
JOHN FEARNLEY
\affil{University of Liverpool, UK}
PAUL SPIRAKIS
\affil{University of Liverpool, UK and Computer Technology Institute (CTI),
Greece}
}

\begin{abstract}
In this paper, we study games with continuous action spaces and non-linear
payoff functions. Our key insight is that Lipschitz continuity of the payoff
function allows us to provide algorithms for finding approximate equilibria in
these games. We begin by studying Lipschitz games, which encompass, for example,
all concave games with Lipschitz continuous payoff functions. We provide an
efficient algorithm for computing approximate equilibria in these games. Then we
turn our attention to penalty games, which encompass biased games and games in
which players take risk into account. Here we show that if the penalty function
is Lipschitz continuous, then we can provide a quasi-polynomial time
approximation scheme. Finally, we study distance biased games, where we present
simple strongly polynomial time algorithms for finding best responses in $L_1$,
$L_2^2$, and $L_\infty$ biased games, and then use these algorithms to provide
strongly polynomial algorithms that find $2/3$, $5/7$, and $2/3$ approximations
for these norms, respectively.
\end{abstract}




 \begin{CCSXML}
<ccs2012>
<concept>
<concept_id>10003752.10010070.10010099.10010103</concept_id>
<concept_desc>Theory of computation~Exact and approximate computation of
equilibria</concept_desc>
<concept_significance>500</concept_significance>
</concept>
</ccs2012>
\end{CCSXML}

\ccsdesc[500]{Theory of computation~Exact and approximate computation of
equilibria}

\keywords{Approximate Nash equilibria, Lipschitz games, Concave games, Penalty
games, Biased games}


\maketitle

\section{Introduction} 

The Nash equilibrium~\cite{N} is the central solution concept that is studied in
game theory. However, recent advances have shown that computing an \emph{exact}
Nash equilibrium is \PPAD-complete~\cite{CDT,DGP}, and so there are unlikely to
be polynomial time algorithms for this problem. The hardness of computing exact
equilibria has lead to the study of \emph{approximate} equilibria: while an
exact equilibrium requires that all players have no incentive to deviate from
their current strategy, an $\epsilon$-approximate equilibrium requires only that
their incentive to deviate is less than $\epsilon$. 

A fruitful line of work has developed studying the best approximations that can
be found in polynomial-time for \emph{bimatrix games}, which are two-player
strategic form games. There, after a number of
papers~\cite{DMP07,DMP,BBM10}, the best known algorithm was
given by~\citet{TS}, who provide a polynomial time algorithm that finds a
$0.3393$-equilibrium. A prominent open problem is whether there exists a PTAS
for this problem. The existence of an FPTAS was ruled out by~\citet{CDT} unless
$\PPAD = \mathtt{P}$. While the existence of a PTAS remains open, there is
however a \emph{quasi-polynomial} approximation scheme given by~\citet{LMM}.

In a strategic form game, the game is specified by giving each player a finite
number of strategies, and then specifying a table of payoffs that contains one
entry for every possible combination of strategies that the players might pick.
The players are allowed to use mixed strategies, and so ultimately the payoff
function is a convex combination of the payoffs given in the table. However,
some games can only be modelled in a more general setting where the action
spaces are continuous, or the payoff functions are non-linear.

For example, Rosen's seminal work~\cite{Rosen65} considered a more general
setting of games, called \emph{concave games}, where each player picks a vector
from a convex set. The payoff to each player is specified by a function that
satisfies the following condition: if every other player's strategy is fixed,
then the payoff to a player is a convex function over his strategy space. Rosen
proved that concave games always have an equilibrium. A natural subclass of
concave games, studied by~\citet{CKP}, is the
class of biased games. A biased game is defined by a strategic form game, a
\emph{base strategy} and a \emph{penalty function}. The players play the
strategic form game as normal, but they all suffer a penalty for deviating from
their base strategy. This penalty can be a non-linear function, such as the
$L_2^2$ norm.

In this paper, we study the computation of approximate equilibria in such games.
Our main observation is that 
Lipschitz continuity of the players' payoff functions allows us to provide
algorithms that find approximate equilibria. 
Several papers have studied how the Lipschitz continuity of the players' payoff 
functions affects the existence, the quality, and the complexity of the 
equilibria of the underlying game. \citet{AS13} studied many 
player games and derived bounds for the Lipschitz constant of the utility 
functions for the players that guarantees the existence of pure approximate 
equilibrium for the game. 
\citet{DP14} proved that anonymous games posses pure
approximate equilibria whose quality depends on the Lipschitz constant of the
payoff functions and the number of pure strategies the players have and proved
that this approximate equilibrium can be computed in polynomial time.
Furthermore, they gave a polynomial-time approximation scheme for anonymous
games with many players and constant number of pure strategies. 
\citet{Bab13} presented a best-reply dynamic for $n$ players Lipschitz anonymous
games with two strategies which reaches an approximate pure equilibrium in 
$O(n \log n)$ steps. Recently, \citet{CDO15} proved that it is \PPAD-complete
to compute an \eps-equilibrium in anonymous games with seven pure strategies,
when \eps is exponentially small in the number of the players.
\citet{DK15} studied how some variants of the Lipschitz
continuity of the utility functions are sufficient to guarantee hindsight 
stability of equilibria.

\subsection{Our contribution.}

\paragraph{\bf Lipschitz games}
We begin by studying a very general class of games, where each player's strategy
space is continuous, and represented by a convex set of vectors, and where the
only restriction is that the payoff function is Lipschitz continuous. This class
encompasses, for example, every concave game in which the payoffs are Lipschitz
continuous. This class is so general that exact equilibria, and even approximate
equilibria may not exist. Nevertheless, we give an efficient algorithm that
either outputs an $\epsilon$-equilibrium, or determines that game has no exact
equilibria. More precisely, for $M$ player games that are $\lambda$-continuous
in the $L_p$ norm, for $p \ge 2$, and where $\gamma = \max \|\profx\|_p$ over
all $\profx$ in the strategy space, we either compute an $\epsilon$-equilibrium
or determine that no exact equilibrium exists in time $O\left(Mn^{Mk+l}\right)$,
where $k = O\big(\frac{\lambda^2Mp\gamma^2}{\eps^2}\big)$ and $l =
O\big(\frac{\lambda^2p\gamma^2}{\eps^2}\big)$. Observe that this is a polynomial
time algorithm when $\lambda$, $p$, $\gamma$, $M$, and $\eps$ are constant.

To prove this result, we utilize a recent result of \citet{B15}, which
states that for every vector in a convex set, there is another vector that is
$\epsilon$ close to the original in the $L_p$ norm, and is a convex combination
of $b$ points on the convex hull, where $b$ depends on $p$ and $\epsilon$, but
does not depend on the dimension. Using this result, and the Lipschitz
continuity of the payoffs, allows us to reduce the task of finding an
$\epsilon$-equilibrium to checking only a small number of strategy profiles, and
thus we get a brute-force algorithm that is reminiscent of the QPTAS given by
\citet{LMM} for bimatrix games.

However, life is not so simple for us. Since we study a very general class of
games, verifying whether a given strategy profile is an $\epsilon$-equilibrium
is a non-trivial task. It requires us to compute a \emph{regret} for each
player, which is the difference between the player's best response payoff and
their actual payoff. Computing a best response in a bimatrix game is trivial,
but for Lipschitz games, computing a best response may be a hard problem. We get
around this problem by instead giving an algorithm to compute \emph{approximate}
best responses. Hence we find \emph{approximate} regrets, and it turns out that
this is sufficient for our algorithm to work.

\paragraph{\bf Penalty games}

We then turn our attention to \emph{penalty games}. In these games, the players
play a strategic form game, and their utility is the payoff achieved in the game
\emph{minus} a penalty. The penalty function can be an arbitrary function that
depends on the player's strategy. This is a general class of games that
encompasses a number of games that have been studied before. The biased games
studied by~\citet{CKP}, are penalty games where the penalty is determined by the
amount that a player deviates from a specified base strategy.
The biased model was studied in the past by psychologists~\cite{TK74} and it is
close to what they call \emph{anchoring} \cite{Kahneman92,Chapman99}. 
%
In their seminal paper, \citet{FP10} introduced a model for \emph{risk prone} 
games. This model resembles penalty games since the risk component can be 
encoded in the penalty function. \citet{MM15} followed this line of research 
and provided results on the complexity of deciding if such games possess an 
equilibrium.

We again show that Lipschitz continuity helps us to find approximate equilibria.
The only assumption that we make is that the penalty function is Lipschitz
continuous in an $L_p$ norm with $p \ge 2$. Again, this is a weak restriction,
and it does not guarantee that exact equilibria exist. Even so, we give a
quasi-polynomial time algorithm that either finds an $\epsilon$-equilibrium, or
verifies that the game has no exact equilibrium.

Our result can be seen as a generalisation of the QPTAS given by \citet{LMM} for
bimatrix games. Their approach is to show the existence of an approximate
equilibrium with a logarithmic support. They proved this via the probabilistic
method: if we know an exact equilibrium of a bimatrix game, then we can take
logarithmically many samples from the strategies, and with positive probability
playing the sampled strategies uniformly will be an approximate equilibrium.

We take a similar approach, but since our games are more complicated, our proof
is necessarily more involved. In particular, for \citet{LMM}, proving that the
sampled strategies are an approximate equilibrium only requires showing that the
expected payoff is close the payoff of a pure best response. In penalty games,
best response strategies are not necessarily pure, and so the events that we
must consider are more complex.

\paragraph{\bf Distance biased games}

Finally, we consider distance biased games, which are a subclass of penalty
games that have been studied recently by~\citet{CKP}. They showed that, under
very mild assumptions on the bias function, biased games always have an exact
equilibrium. Furthermore, for the case where the bias function is either the
$L_1$ norm, or the $L_2^2$ norm, they give an exponential time algorithm for
finding an exact equilibrium.

Our results for penalty games already give a QPTAS for biased games, but we are
also interested in whether there are polynomial-time algorithms that can find
non-trivial approximations. We give a positive answer to this question for games
where the bias is the $L_1$ norm, the $L_2^2$ norm, or the $L_\infty$ norm. We
follow the well-known approach of~\citet{DMP}, who gave a simple algorithm for
finding a $0.5$-approximate equilibrium in a bimatrix game. Their approach is as
follows: start with an arbitrary strategy $\profx$ for player 1, compute a best
response $j$ for player 2 against $\profx$, and then compute a best response $i$
for player 1 against $j$. Player 1 mixes uniformly between $\profx$ and $i$,
while player 2 plays $j$.

We show that this algorithm also works for biased games, although the
generalisation is not entirely trivial. Again, this is because best responses
cannot be trivially computed in biased games. For the $L_1$ and $L_\infty$
norms, best responses can be computed via linear programming, and for the
$L_2^2$ norm, best responses can be formulated as a quadratic program, and it
turns out that this particular QP can be solved in polynomial time by the
ellipsoid method. However, none of these algorithms are strongly polynomial. We
show that, for each of the norms, best responses can be found by a simple
strongly-polynomial combinatorial algorithm. We then analyse the quality of
approximation provided by the technique of~\citet{DMP}. We obtain a strongly
polynomial algorithm for finding a $2/3$ approximation in $L_1$ and $L_\infty$
biased games, and a strongly polynomial algorithm for finding a $5/7$
approximation in $L_2^2$ biased games. For the latter result, in the special
case where the bias function is the inner product of the player's strategy we
find a $13/21$ approximation.

\section{Preliminaries}
\label{sec:pre}

We start by fixing some notation. For each positive integer $n$ we use $[n]$ to 
denote the set $\{1, 2, \ldots, n\}$, we use $\Delta^n$ to denote the 
$(n-1)$-dimensional simplex, and $\|x\|_p$ to denote the $p$-norm of a vector 
$x \in \reals^d$, i.e. $\|x\|_p = \left(\sum_{i \in [d]}|x_i|^p\right)^{1/p}$. 
Given a set $X = \{x_1, x_2, \ldots, x_n\} \subset \reals^d$, we use $conv(X)$
to denote the convex hull of $X$.

\paragraph{\bf Games and strategies}
A game with $M$-players can be described by a set of available actions for each
player and a utility function for each player that depends both on his chosen 
action and the actions the rest of the players chose. For each player $i \in [M]$
we use $S_i$ to denote his set of available actions and we call it \emph{strategy
space}. We will use $x_i \in S_i$ to denote a specific action chosen by player
$i$ and we will call it as the \emph{strategy} of player $i$. Furthermore, we 
use $\prof = (x_1, \ldots, x_M)$ to denote a \emph{strategy profile} of the game.
We use $T_i(x_i, \prof_{-i})$ to denote the utility of player $i$ when he plays
the strategy $x_i$ and the rest of the players play according to the strategy 
profile $\prof_{-i}$. A strategy $\hat{x}_i$ is a \emph{best response} against 
the strategy profile $\prof_{-i}$, if $T_i(\hat{x}_i, \prof_{-i}) \geq 
T_i(x_i, \prof_{-i})$ for all $x_i \in S_i$. The \emph{regret} player $i$ suffers
under a strategy profile $\prof$ is the difference between the utility of his 
best response and his utility under \prof, i.e. $T_i(\hat{x}_i, \prof_{-i}) - 
T_i(x_i, \prof_{-i})$.

\paragraph{\bf $\lambda_p$-Lipschitz Games}
We will use the notion of the $\lambda_p$-Lipschitz continuity.
\begin{definition}[$\lambda_p$-Lipschitz]
\label{def:lip}
A function $f: A \rightarrow \reals$, with $A \subseteq \reals^d$ is 
$\lambda_p$-Lipschitz continuous if for every $x$ and $y$ in $A$, it is true that 
$|f(x) - f(y)| \leq \lambda \cdot \|x-y\|_p$.
\end{definition}
We call the game $\lgame := (M, n, \lambda, p, \gamma, \calt)$ 
\emph{$\lambda_p$-Lipschitz} if for each player $i \in [M]$ 
\begin{itemize}
\item the strategy space $S_i$ is the convex hull of $n$ vectors $y_1, \ldots, 
y_n$ in $\reals^d$,
\item $\max_{x_i \in S_i}\|x_i\|_p \leq \gamma$
\item the utility function $T_i(\prof) \in \calt$ is $\lambda_p$-Lipschitz continuous.
\end{itemize}

\paragraph{\bf Two Player Penalty Games}
A two player penalty game \calp is defined by a tuple 
$\big(R, C, \ff_r(\profx), \ff_c(\profy) \big)$, where $(R, C)$ is a bimatrix 
game and $\ff_r(\profx)$ and $\ff_c(\profy)$ are the penalty 
functions for the row and the column player respectively. 
The utilities for the players under a strategy profile $(\profx, \profy)$, 
denoted by $T_r(\profx, \profy)$ and $T_c(\profx, \profy)$, are given by
\begin{align*}
T_r(\profx, \profy) = \profx^TR\profy - \ff_r(\profx) \qquad \qquad
T_c(\profx, \profy) = \profx^TC\profy - \ff_c(\profy).
\end{align*}
We will use \plip to denote two player penalty games with $\lambda_p$-Lipschitz 
penalty functions. 
%
A special class of penalty games is when $\ff_r(\profx) = \profx^T\profx$ and 
$\ff_c(\profy) = \profy^T\profy$. We call these games as \emph{inner product} 
penalty games.
%

\paragraph{\bf Two Player Biased Games}
This is a subclass of penalty games, where extra constraints are added to the 
penalty functions $\ff_r(\profx)$ and $\ff_c(\profy)$ of the players. In this 
class of games there is a \emph{base strategy} and for each player and the 
penalty they receive is increasing with the distance between the strategy they 
choose and their base strategy. Formally, the row player has a base strategy
$\bp \in \Delta^n$, the column player has a base strategy $\bq$ and their 
strictly increasing penalty functions are defined as $\ff_r(\|\profx - \bp\|^s_t)$
and $\ff_c(\|\profy - \bq\|^l_m)$ respectively.

\paragraph{\bf Two Player Distance Biased Games}
This is a special class of biased games where the penalty function is a fraction
of the distance between the base strategy of the player and his chosen strategy.
Formally, a two player distance biased game \biased is defined by a tuple 
$\big( R,C, \bbb_r(\profx, \bp), \bbb_c(\profy, \bq), d_r, d_c \big)$, where 
$(R,C)$ is a bimatrix game, $\bp \in \Delta^n$ is a base strategy for the row 
player, $\bq \in \Delta^n$ is a base strategy for the column player, 
$\bbb_r(\profx, \bp) = \| \profx - \bp\|^s_t$ and 
$\bbb_c(\profy, \bq) = \| \profy - \bq\|^l_m$ are penalty functions for the row
and the column player respectively.  
The utilities for the players under a strategy profile $(\profx, \profy)$, 
denoted by $T_r(\profx, \profy)$ and $T_c(\profx, \profy)$, are given by
\begin{align*}
T_r(\profx, \profy) = \profx^TR\profy - d_r \cdot \bbb_r(\profx, \bp) \qquad \qquad
T_c(\profx, \profy) = \profx^TC\profy - d_c \cdot \bbb_c(\profy, \bq)
\end{align*}
where $d_r$ and $d_c$ are non negative constants.


%
%

\paragraph{\bf Solution Concepts} 
The standard solution concept in game theory is the notion of equilibrium. A 
strategy profile is an equilibrium if no player can increase his utility by 
unilaterally changing his strategy. A relaxed version of this concept 
is the approximate equilibrium, or \eps-equilibrium.
Intuitively, a strategy profile is an \eps-equilibrium if no player can increase 
his utility more than \eps by unilaterally changing his strategy. Formally, a 
strategy profile \prof is an \eps-equilibrium in a game \lgame if for every 
player $i \in [M]$ it holds that
\begin{align*}
T_i(x_i, \prof_{-i}) \geq T_i(x'_i, \prof_{-i})- \eps \quad \text{for all $x_i' \in S_i$}.
\end{align*}

In~\cite{CDT} it was proven that, unless $\mathtt{P} = \PPAD$, there is no FPTAS 
for computing an $\eps$-NE in bimatrix games. The same result holds for the
class of penalty games where the penalty functions $\ff$ for the players depend 
on $n$, the size of the underlying bimatrix game, and 
$\lim_{n\rightarrow \infty}\ff = 0$ for every player. Let $\calp'$ to denote 
this class of games.
\begin{theorem}
\label{thm:nofptas}
Unless $\mathtt{P} = \PPAD$, there is no FPTAS for computing an $\eps$-equilibrium 
in penalty games in $\calp'$.
\end{theorem}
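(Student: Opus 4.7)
My plan is a polynomial-time reduction from the problem of computing an $\eps$-equilibrium in a bimatrix game, which admits no FPTAS unless $\mathtt{P} = \PPAD$ by~\cite{CDT}, to the problem of computing an $\eps$-equilibrium in $\calp'$. Assume, toward a contradiction, that there is an FPTAS $\mathcal{A}$ for $\calp'$. I fix a canonical family of penalty functions $\ff_n(\profx) := \|\profx\|_2^2/n$, which is bounded above by $1/n$ on $\Delta^n$ and satisfies $\lim_{n\to\infty}\ff_n = 0$; any penalty game using these functions therefore belongs to $\calp'$.

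Given a bimatrix instance $(R,C)$ of size $n\times n$, normalised so that its payoffs lie in $[0,1]$, and a target accuracy $\eps > 0$, I pad the game to size $N\times N$ with $N := \max(n,\lceil 8/\eps\rceil)$. Concretely, I set $R'_{ij} := R_{ij}$ for $i,j\leq n$, $R'_{ij} := -1$ whenever $i > n$, and $R'_{ij} := 0$ for $i\leq n$ and $j>n$, with a symmetric construction for $C'$; every dummy row and dummy column is then strictly dominated by at least $1$. I form the penalty game $G' := (R',C',\ff_N,\ff_N) \in \calp'$ and run $\mathcal{A}$ on $G'$ with accuracy $\eps/4$, producing a profile $(\profx',\profy')$ in time $\mathrm{poly}(N, 1/\eps) = \mathrm{poly}(n, 1/\eps)$.

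It then remains to convert $(\profx',\profy')$ into an $\eps$-equilibrium of $(R,C)$. Since $0 \leq \ff_N \leq 1/N \leq \eps/8$, a one-line manipulation upgrades any $\eps/4$-equilibrium of the penalty game to a $(3\eps/8)$-equilibrium of the underlying bimatrix game $(R',C')$. Writing $p$ and $q$ for the mass that $\profx'$ and $\profy'$ place on dummy strategies, and $\bar{\mathbf{x}}, \bar{\mathbf{y}} \in \Delta^n$ for the renormalised restrictions to the first $n$ coordinates, I decompose the row regret in $(R',C')$ as $(1-q)\bigl[\max_i (R\bar{\mathbf{y}})_i - (1-p)\bar{\mathbf{x}}^T R\bar{\mathbf{y}}\bigr] + p$, both summands non-negative. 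The $3\eps/8$ bound on this sum forces $p \leq 3\eps/8$ (and by symmetry $q \leq 3\eps/8$), and the regret of $\bar{\mathbf{x}}$ against $\bar{\mathbf{y}}$ in $(R,C)$ then works out to at most $(3\eps/8)/(1-q) \leq \eps$, with a symmetric bound for $\bar{\mathbf{y}}$. This yields an FPTAS for bimatrix games, contradicting~\cite{CDT}.

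The main obstacle is this final regret calculation: one must separately control the error introduced by the vanishing penalty and the mass that leaks onto the dummy strategies, and verify that the two sources combine cleanly into a single $\eps$-equilibrium guarantee for the original game $(R,C)$ after renormalisation. Once the regret is decomposed as above, the two summands can be bounded in isolation, and the rest of the argument is routine.
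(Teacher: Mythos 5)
Your proof is correct and takes the same high-level route as the paper (a reduction from bimatrix $\eps$-NE computation), but it is substantially more careful than the paper's sketch, and the extra care is genuinely needed. The paper simply writes ``$\eps' = \eps'' = \eps$ when $n \to \infty$'' and concludes that the $\eps$-equilibrium of the penalty game is an $\eps$-NE of the bimatrix game. That step is not quantitative: the additional error $\ff_r(\profx^*) - \ff_r(\profx')$ depends on $n$, whereas in an FPTAS the accuracy $\eps$ is an independent input that can be much smaller than any bound available from ``$\lim_{n\to\infty} \ff = 0$'' at the given instance size. Your padding to size $N = \max(n, \lceil 8/\eps\rceil)$ is precisely the mechanism that makes the penalty provably smaller than a constant fraction of $\eps$ while keeping the reduction polynomial in $(n, 1/\eps)$; the paper has no analogue of this step. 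The price you pay for padding is that the equilibrium of $(R',C')$ may put mass on dummy strategies, and your regret decomposition $(1-q)\bigl[\max_i (R\bar{\mathbf{y}})_i - (1-p)\bar{\mathbf{x}}^T R\bar{\mathbf{y}}\bigr] + p$, with both terms non-negative (because payoffs are in $[0,1]$), cleanly isolates and bounds that leaked mass by $3\eps/8$, after which renormalisation gives the final $\eps$ bound (for the relevant range $\eps \le 1$). So your proposal fills a real gap left implicit in the paper's proof; I would regard it as the more complete argument for the stated theorem.

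Two small points worth noting. First, you should fix a penalty family that decays at least as $1/N$ (as you do with $\|\profx\|_2^2/N$) — the theorem is stated for the whole class $\calp'$, but to refute an FPTAS it suffices to exhibit one such family, so your choice is appropriate; the paper's proof leaves the choice vague, which makes the limit argument harder to interpret. Second, using payoffs $-1$ on dummy rows takes the padded game outside $[0,1]$; this is harmless for the argument, but you could also pad with $0$-payoff rows as long as you additionally observe that any dummy pure strategy is weakly dominated and that the sign of the leaked-mass contribution still lets the decomposition go through.
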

\begin{proof}
For the sake of contradiction suppose that there is an FPTAS for computing an
\eps-equilibrium for penalty games in $\calp'$. Then given an $n \times n$ 
bimatrix game $(R,C)$, define the penalty game $\big( R,C, 
\ff_r(\profx), \ff_c(\profy) \big)$ from the family $\calp'$ where 
$\lim_{n\rightarrow \infty}\ff_r(\profx) = 0$ and 
$\lim_{n\rightarrow \infty}\ff_c(\profy) = 0$. Let $(\profx^*, \profy^*)$ be an 
\eps-equilibrium for the penalty game. This means that for all 
$\profx' \in \Delta^n$ it holds that $\profx^{*^T}R\profy^* - \ff_r(\profx^*) \geq 
\profx'^TR\profy^* - \ff_r(\profx') - \eps$ or, equivalently, 
$\profx^{*^T}R\profy^* \geq \profx'^TR\profy^* - \eps'$, where 
$\eps' = \eps + \ff_r(\profx^*) - \ff_r(\profx')$. 
Similarly, $\profx^{*^T}C\profy^* \geq \profx^{*^T}C\profy' - \eps''$, where 
$\eps'' = \eps + \ff_c(\profy^*) - \ff_r(\profy')$. But $\eps' = \eps'' = \eps$
when $n \rightarrow \infty$. Hence $(\profx^*, \profy^*)$ is a $\eps$-NE for 
the bimatrix game $(R,C)$. This means that if there is an FPTAS for computing an
$\eps$-equilibrium in a penalty game in $\calp'$ then there is an FPTAS for 
computing an $\eps$-NE in $(R,C)$ which is a contradiction, unless 
$\mathtt{P} = \PPAD$.
\qed
\end{proof}

\section{Approximate equilibria in $\lambda_p$-Lipschitz games}

\label{sec:lip-games}
In this section, we give an algorithm for computing
approximate equilibria in $\lambda_p$ Lipschitz games. Note that, our definition
of a $\lambda_p$-Lipschitz game does not guarantee that an equilibrium always
exists. Our technique can be applied \emph{irrespective} of whether an exact
equilibrium exists. If an exact equilibrium does exist, then our technique will
always find an $\eps$-equilibrium. If an exact
equilibrium does not exist, then our then our algorithm either finds an
$\eps$-equilibrium or reports that the game does not have an exact equilibrium.

We will utilize the following theorem that was recently proved in~\citet{B15}.
\begin{theorem}[\cite{B15}]
\label{thm:barman}
Given a set of vectors $X = \{x_1, x_2, \ldots, x_n \} \subset \reals^d$, let 
$conv(X)$ denote the convex hull of $X$. Furthermore, let
$\gamma := \max_{x \in X}\| x\|_p$ for some $2 \leq p < \infty$.
For every $\eps > 0$ and every $\mu \in conv(X)$, there exists an 
$\frac{4p\gamma^2}{\eps^2}$ uniform vector $\mu' \in conv(X)$ such that 
$\|\mu - \mu' \|_p \leq \eps$.
\end{theorem}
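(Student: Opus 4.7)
The plan is to prove the statement by the probabilistic method, via a sampling argument on the convex combination defining $\mu$. Write $\mu$ as $\mu = \sum_{i=1}^n \alpha_i x_i$ with $\alpha_i \ge 0$ and $\sum_i \alpha_i = 1$, and interpret $(\alpha_1, \ldots, \alpha_n)$ as a probability distribution on $X$. Set $k = \lceil 4p\gamma^2/\eps^2 \rceil$ and draw independent samples $Z_1, \ldots, Z_k$ from this distribution, so that $\Pr[Z_j = x_i] = \alpha_i$ for all $i, j$. Define
\begin{align*}
\mu' \;:=\; \frac{1}{k}\sum_{j=1}^k Z_j,
\end{align*}
which lies in $conv(X)$, is $k$-uniform by construction, and satisfies $\mathbb{E}[\mu'] = \mu$.

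The heart of the argument is then to bound $\mathbb{E}\|\mu' - \mu\|_p$. I would expand the $p$-norm coordinatewise and apply a scalar Khintchine-type moment inequality to each coordinate. For every coordinate $l$, the variables $Z_{j,l} - \mu_l$ are iid, centered, and have variance at most $\mathbb{E}[Z_{j,l}^2]$, so
\begin{align*}
\mathbb{E}\|\mu' - \mu\|_p^p \;=\; \sum_{l} \mathbb{E}\Bigl|\tfrac{1}{k}\sum_{j=1}^k (Z_{j,l} - \mu_l)\Bigr|^p \;\le\; \frac{C_p}{k^{p/2}} \sum_l \bigl(\mathbb{E}[Z_{j,l}^2]\bigr)^{p/2},
\end{align*}
for a constant $C_p$ coming from the scalar moment bound. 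Since $p/2 \ge 1$, Jensen's inequality gives $(\mathbb{E}[Z_{j,l}^2])^{p/2} \le \mathbb{E}[|Z_{j,l}|^p]$, so the sum on the right is at most $\mathbb{E}\|Z_j\|_p^p \le \gamma^p$. Combining and applying Jensen once more to pass from the $p$-th moment to the first, $\mathbb{E}\|\mu' - \mu\|_p \le C_p^{1/p} \gamma / \sqrt{k}$, which, provided $C_p^{1/p} \le 2\sqrt{p}$, is at most $\eps$ for the chosen $k$. The probabilistic method then supplies a realization with $\|\mu' - \mu\|_p \le \eps$; this realization is a $k$-uniform vector in $conv(X)$ of the required form.

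The main obstacle is pinning down the constant $C_p$ tightly enough to match the clean bound $4p\gamma^2/\eps^2$ in the statement. One needs $C_p^{1/p} = O(\sqrt{p})$, which is exactly what the sharp Khintchine inequality provides --- equivalently, the fact that $\ell_p$ is of Rademacher type $2$ for $p \ge 2$ with type-$2$ constant of order $\sqrt{p}$. A naive Rosenthal-style bound or a crude symmetrization using only $\|Z_j - \mu\|_p \le 2\gamma$ would blow up constants; the coordinatewise route above, which replaces the uniform $p$-norm bound with the per-coordinate variance, is what keeps the bound clean. Once this constant tracking is done, the remaining ingredients --- the sampling construction, linearity of expectation, and the concluding existence argument --- are routine.
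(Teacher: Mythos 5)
The paper does not prove this statement; it is cited directly from Barman~\cite{B15}. Your overall strategy --- sample $k$ times from the convex combination, set $\mu'$ to the empirical average, bound $\mathbb{E}\|\mu'-\mu\|_p$, and invoke the probabilistic method --- is exactly Barman's. However, there is a genuine gap in your key inequality. You claim a ``scalar Khintchine-type'' bound of the form
\begin{equation*}
\mathbb{E}\Bigl|\tfrac{1}{k}\textstyle\sum_{j}(Z_{j,l}-\mu_l)\Bigr|^p \;\le\; \frac{C_p}{k^{p/2}}\bigl(\mathbb{E}[Z_{j,l}^2]\bigr)^{p/2}
\end{equation*}
for iid centered random variables, but no standard inequality delivers this as stated. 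Khintchine applies to \emph{Rademacher} sums with deterministic coefficients; Marcinkiewicz--Zygmund gives $\mathbb{E}|\sum_j W_j|^p \le A_p\,\mathbb{E}(\sum_j W_j^2)^{p/2}$, and passing the expectation inside the power $p/2 \ge 1$ is Jensen in the \emph{wrong} direction; and Rosenthal's inequality carries an additional term $\sum_j\mathbb{E}|W_j|^p$ that you silently drop, which is not negligible when the per-coordinate variance is small relative to $\gamma$. So the displayed bound is false in general (e.g.\ already at $k=1$ with a spiky centered distribution), and the regime where it fails is not excluded by your choice of $k$.

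The fix is the one you actually dismiss: symmetrize first, then apply Khintchine coordinatewise. The standard symmetrization inequality $\mathbb{E}\|\sum_j(Z_j-\mu)\|_p \le 2\,\mathbb{E}\|\sum_j\sigma_j Z_j\|_p$ replaces $Z_j - \mu$ by $\sigma_j Z_j$, not by $\sigma_j(Z_j-\mu)$, so only the factor $2$ from symmetrization enters --- you never need the lossy bound $\|Z_j-\mu\|_p\le 2\gamma$ that you worry about. Conditionally on the $Z_j$, Khintchine applies legitimately coordinate by coordinate to $\sum_j\sigma_j Z_{j,l}$ with deterministic coefficients $Z_{j,l}$, giving $\mathbb{E}_\sigma|\sum_j\sigma_j Z_{j,l}|^p \le B_p^p(\sum_j Z_{j,l}^2)^{p/2}$ with $B_p\le\sqrt{p-1}$. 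Summing over $l$ and applying Minkowski's inequality in $L_{p/2}$ yields $\sum_l(\sum_j Z_{j,l}^2)^{p/2}\le(\sum_j\|Z_j\|_p^2)^{p/2}\le(k\gamma^2)^{p/2}$ pointwise. Combining, $\mathbb{E}\|\mu'-\mu\|_p \le 2\sqrt{p-1}\,\gamma/\sqrt{k}$, which is at most $\eps$ for $k = 4p\gamma^2/\eps^2$. This is precisely Barman's type-$2$ argument; your coordinatewise shortcut, which tries to bypass symmetrization, is the step that does not go through.
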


If we combine the Theorem~\ref{thm:barman} with the Definition~\ref{def:lip} we
get the following lemma.

\begin{lemma}
\label{lem:optim}
Let $X = \{x_1, x_2, \ldots, x_n \} \subset \reals^d$, let 
$f: conv(X) \rightarrow \reals$ be a $\lambda_p$-Lipschitz continuous function 
for some $2 \leq p < \infty$, let $\eps>0$ and let 
$k = \frac{4\lambda^2p\gamma^2}{\eps^2}$, where 
$\gamma := \max_{x \in X}\| x\|_p$. Furthermore, let $f(\prof^*)$ be the optimum 
value of $f$. Then we can compute a $k$-uniform point $\prof' \in conv(X)$ in 
time $O(n^k)$, such that $|f(\prof^*) - f(\prof')|< \eps$. 
\end{lemma}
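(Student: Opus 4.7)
The plan is to combine Theorem~\ref{thm:barman} with the $\lambda_p$-Lipschitz property of $f$, and then invoke a brute-force enumeration to make the statement algorithmic.

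First I would apply Theorem~\ref{thm:barman} to the (unknown) optimal point $\prof^* \in conv(X)$, but with the accuracy parameter $\eps/\lambda$ in place of $\eps$. Since $2 \leq p < \infty$ and $\gamma = \max_{x \in X} \|x\|_p$, the theorem produces a $\frac{4p\gamma^2}{(\eps/\lambda)^2} = \frac{4\lambda^2 p \gamma^2}{\eps^2} = k$-uniform vector $\tilde{\prof} \in conv(X)$ with
\begin{equation*}
\|\prof^* - \tilde{\prof}\|_p \leq \eps/\lambda.
\end{equation*}
Then I would invoke Definition~\ref{def:lip}: since $f$ is $\lambda_p$-Lipschitz continuous,
\begin{equation*}
|f(\prof^*) - f(\tilde{\prof})| \leq \lambda \cdot \|\prof^* - \tilde{\prof}\|_p \leq \eps,
\end{equation*}
so there \emph{exists} a $k$-uniform point in $conv(X)$ whose $f$-value is within $\eps$ of $f(\prof^*)$.

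The computational step is then a straightforward brute-force search. A $k$-uniform point of $conv(X)$ is specified by a multiset of size $k$ over the $n$ vertices $x_1, \ldots, x_n$, so the number of distinct $k$-uniform points is $\binom{n+k-1}{k} = O(n^k)$. I would enumerate all of them, evaluate $f$ on each, and output the one whose value is closest to (or optimizes in the desired direction of) $f$; call this point $\prof'$. Since $\tilde{\prof}$ is one of the enumerated candidates and satisfies the $\eps$ bound, the selected $\prof'$ does as well, i.e.\ $|f(\prof^*) - f(\prof')| < \eps$. The total running time is $O(n^k)$ as claimed (treating each evaluation of $f$ as an oracle call).

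There is really no hard step here: the lemma is essentially a direct consequence of Theorem~\ref{thm:barman} plus the definition of Lipschitz continuity, with the $\eps/\lambda$ rescaling being the only substantive move. The only mild subtlety is that we do not know $\prof^*$ explicitly, but since the existence of a good $k$-uniform approximator is guaranteed by Theorem~\ref{thm:barman}, exhaustively searching over the (polynomially-in-$n$, for constant $k$) set of $k$-uniform points is sufficient to find one.
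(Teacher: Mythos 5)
Your proof is correct and follows essentially the same route as the paper's: invoke Theorem~\ref{thm:barman} with accuracy parameter $\eps/\lambda$ to obtain a $k$-uniform point within $\eps/\lambda$ of $\prof^*$, apply $\lambda_p$-Lipschitz continuity to bound the difference in $f$-values, and then brute-force over the $\binom{n+k-1}{k} = O(n^k)$ $k$-uniform points, selecting the one that optimizes $f$. The only minor slip is the phrase ``output the one whose value is closest to $f(\prof^*)$,'' which is not directly implementable since $f(\prof^*)$ is unknown, but your parenthetical correction (take the one that maximizes/minimizes $f$, as appropriate) is what the paper does and is what makes the algorithm valid.
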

\begin{proof}
From Theorem~\ref{thm:barman} we know that for the chosen value of $k$ there 
exists a $k$-uniform point $\prof'$  such that 
$\|\prof' - \prof^*\|_p < \eps/\lambda$. 
Since the function $f(\profx)$ is $\lambda_p$-Lipschitz continuous, we get that 
$|f(\prof') - f(\prof^*)| < \eps$. 
In order to compute this point we have to exhaustively evaluate the function 
$f$ in all $k$-uniform points and choose the point that it maximizes/minimizes 
its value. Since there are ${n+k-1 \choose k} = O(n^k)$ possible $k$-uniform 
points, the theorem follows.
\qed
\end{proof}

We now prove our result about Lipschitz games. In what follows we will study a
$\lambda_p$-Lipschitz game $\lgame := (M, n, \lambda, p, \gamma, \calt)$.
Assuming the existence of an exact Nash equilibrium, we establish the existence
of a $k$-uniform approximate equilibrium in the game \lgame, where $k$ depends
on $M,\lambda, p$ and $\gamma$. Note that $\lambda$ depends heavily on $p$ and
the utility functions for the players.

Since by the definition of $\lambda_p$-Lipschitz games the 
strategy space $S_i$ for every player $i$ is the convex hull of $n$ vectors
$y_1, \ldots, y_n$ in $\reals^d$, any $x_i \in S_i$ can be written as a convex 
combination of $y_j$s. Hence, $x_i = \sum_{j=1}^n \alpha_jy_j$, where 
$\alpha_j > 0$ for every $j \in [n]$ and $\sum_{j=1}^n \alpha_j = 1$. Then, 
$\alpha = (\alpha_1, \ldots, \alpha_n)$ is a probability distribution over the 
vectors $y_1, \ldots, y_n$, i.e. vector $y_j$ is drawn with probability 
$\alpha_j$. Thus, we can sample a strategy $x_i$ by the probability distribution
$\alpha$.

So, let $\prof^*$ be an equilibrium for \lgame and let $\prof'$ be a sampled
uniform strategy profile from $\prof^*$. For each player $i$ we define the
following events
\begin{align}
\label{ev:payoff}
\phi_i & = \big\{|T_i(x'_i, \prof'_{-i}) - T_i(x^*_i, \prof^*_{-i}) | < \eps/2 \big\}\\
\label{ev:equil}
\pi_i & = \big\{T_i(x_i, \prof'_{-i}) < T_i(x'_i, \prof'_{-i})  + \eps \big\} 
\quad \text{for all possible $x_i$}\\
\label{ev:close}
\psi_i & = \left\{\|x'_i - x^*_i \|_p < \frac{\eps}{2M\lambda} \right\} \quad 
\text{for some $p > 0$}.
\end{align}
Notice that if all the events $\pi_i$ occur at the same time, then the sampled 
profile $\prof'$ is an $\eps$-equilibrium. 
We will show that if for a player $i$ the events $\phi_i$ and $\bigcap_j\psi_j$ 
hold, then the event $\pi_i$ has to be true too.
\begin{lemma}
\label{lem:con-pi}
For all $i \in [M]$ it holds that 
$\bigcap_{j \in [M]} \psi_j \cap \phi_i \subseteq  \pi_i$.
\end{lemma}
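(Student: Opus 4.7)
The plan is to fix an arbitrary player $i$ and an arbitrary deviation $x_i \in S_i$, then to chain together three inequalities that compare payoffs at $(x_i,\prof'_{-i})$, $(x_i,\prof^*_{-i})$, $(x^*_i,\prof^*_{-i})$, and $(x'_i,\prof'_{-i})$. Two of the steps in the chain are immediate consequences of the hypotheses, while the third uses the Lipschitz continuity of $T_i$ together with the $\psi_j$ events to bound the effect of replacing $\prof^*_{-i}$ by $\prof'_{-i}$.

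More precisely, I will first note that since $\prof^*$ is assumed to be an exact equilibrium of \lgame, for every $x_i \in S_i$ we have the trivial bound
\begin{equation*}
T_i(x_i, \prof^*_{-i}) \leq T_i(x^*_i, \prof^*_{-i}).
\end{equation*}
Next, the event $\phi_i$ directly gives
\begin{equation*}
T_i(x^*_i, \prof^*_{-i}) < T_i(x'_i, \prof'_{-i}) + \eps/2.
\end{equation*}
The remaining task is to show
\begin{equation*}
T_i(x_i, \prof'_{-i}) < T_i(x_i, \prof^*_{-i}) + \eps/2,
\end{equation*}
from which $\pi_i$ follows by summing the three inequalities.

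The main technical step is this last one, which will use the events $\psi_j$ for all $j \neq i$ together with the $\lambda_p$-Lipschitz property of $T_i$. Viewing the strategy profile as a single vector in $\reals^{Md}$ with the induced $\ell_p$ norm, I would write
\begin{equation*}
\bigl\|(x_i, \prof'_{-i}) - (x_i, \prof^*_{-i})\bigr\|_p \;=\; \Bigl(\textstyle\sum_{j \neq i} \|x'_j - x^*_j\|_p^p\Bigr)^{1/p},
\end{equation*}
and then apply each $\psi_j$ to bound every summand by $(\eps/(2M\lambda))^p$. This yields
\begin{equation*}
\bigl\|(x_i, \prof'_{-i}) - (x_i, \prof^*_{-i})\bigr\|_p \;<\; (M-1)^{1/p}\cdot\frac{\eps}{2M\lambda} \;\leq\; \frac{\eps}{2\lambda},
\end{equation*}
and the $\lambda_p$-Lipschitz property of $T_i$ converts this into the desired payoff bound $\eps/2$. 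Chaining the three inequalities establishes $T_i(x_i, \prof'_{-i}) < T_i(x'_i, \prof'_{-i}) + \eps$ for every $x_i \in S_i$, which is exactly $\pi_i$.

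The expected obstacle is purely cosmetic: being careful that the aggregation of the per-player $\ell_p$ distances into a single $\ell_p$ distance on the joint profile is the correct operation to pair with the assumed Lipschitz constant $\lambda$ of $T_i$, and that the constant $1/(2M\lambda)$ in the definition of $\psi_j$ is exactly what is needed to absorb the factor $(M-1)^{1/p}$ (which is at most $M$, hence at most $2M$) arising from summing over the $M-1$ other players. Everything else is a straightforward chain of inequalities.
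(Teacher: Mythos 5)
Your proof is correct and follows essentially the same route as the paper: the same chain of three inequalities (equilibrium optimality, the $\phi_i$ event, and the Lipschitz bound derived from the $\psi_j$ events). The only difference is cosmetic: you bound $\|(x_i,\prof'_{-i}) - (x_i,\prof^*_{-i})\|_p$ by the exact coordinate-wise formula $(\sum_{j\neq i}\|x'_j - x^*_j\|_p^p)^{1/p}$, whereas the paper uses the slightly looser triangle inequality $\|\prof'_{-i}-\prof^*_{-i}\|_p \leq \sum_{j\neq i}\|x'_j-x^*_j\|_p$, but both yield the needed $\eps/(2\lambda)$ bound.
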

\begin{proof}
Suppose that both events $\phi_i$ and $\bigcap_j \psi_{j \in [M]}$ hold. We will 
show that the event $\pi_i$ must be true too. Let $x_i$ be an arbitrary strategy, 
let $\prof^*_{-i}$ be a strategy profile for the rest of the players, and let 
$\prof'_{-i}$ be a sampled strategy profile from $\prof^*_{-i}$. 
Since we assume that the events $\psi_j$ is true for all $j$ we get
$\|\prof'_{-i} - \prof^*_{-i}\|_p \leq \sum_{j\neq i} \|x'_j - x^*_j\|_p $
we get that
\begin{align*}
\|\prof'_{-i} - \prof^*_{-i}\|_p & \leq \sum_{j\neq i} \|x'_j - x^*_j\|_p \\
 & \leq \sum_{j\neq i} \frac{\eps}{2M\lambda} \\
 & \leq \frac{\eps}{2\lambda}.
\end{align*}
Furthermore, since by assumption the utility functions for the players are
$\lambda_p$-Lipschitz continuous we have that
$$\big|T_i(x_i, \prof'_{-i}) - T_i(x_i,\prof^*_{-i}) \big| \leq \frac{\eps}{2}.$$
This means that
\begin{align}
\nonumber
T_i(x_i, \prof'_{-i}) & \leq T_i(x_i,\prof^*_{-i}) + \frac{\eps}{2} \\
\label{eq:qph1}
& \leq T_i(x^*_i,\prof^*_{-i}) + \frac{\eps}{2}
\end{align}
since $T_i(x^*_i,\prof^*_{-i}) \geq T_i(x_i, \prof^*_{-i})$ for all possible $x_i$;
the strategy profile $(x^*_i,\prof^*_{-i})$ is an equilibrium of the game.
Furthermore, since by assumption the event $\phi_i$ is true we get that
\begin{align}
\label{eq:qph2}
T_i(x^*_i, \prof^*_{-i}) < T_i(x'_i, \prof'_{-i}) + \frac{\eps}{2}.
\end{align}
Hence, if we combine the inequalities~\eqref{eq:qph1} and~\eqref{eq:qph2} we get 
that $T_i(x_i, \prof'_{-i}) < T_i(x'_i, \prof'_{-i}) + \eps$ for all possible 
$x_i$. Thus, if the events $\phi_i$ and $\psi_j$ for every $j \in [M]$ hold, 
then the event $\pi_i$ holds too.
\qed
\end{proof}

We are ready to prove the main result of the section.
\begin{theorem}
\label{thm:existence}
In any game $\lambda_p$-Lipschitz game \lgame that posses an equilibrium and any 
$\eps > 0$, there is a $k$-uniform strategy profile, with 
$k = \frac{16M^2\lambda^2p\gamma^2}{\eps^2}$ that is an \eps-equilibrium.
\end{theorem}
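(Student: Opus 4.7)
The plan is to construct the desired $k$-uniform $\eps$-equilibrium explicitly from an assumed exact equilibrium $\prof^* = (x_1^*, \ldots, x_M^*)$ of \lgame, rather than appealing to a probabilistic/sampling argument. The core tool will be Theorem~\ref{thm:barman} applied independently to each player's strategy space, followed by Lemma~\ref{lem:con-pi} to certify the resulting profile.

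First, I would fix the target per-player accuracy $\eps' := \frac{\eps}{2M\lambda}$. For each player $i \in [M]$, since $x_i^* \in S_i = conv(\{y_1, \ldots, y_n\})$ and $\max_{x_i \in S_i}\|x_i\|_p \leq \gamma$, applying Theorem~\ref{thm:barman} produces a $k$-uniform vector $x_i' \in S_i$ satisfying $\|x_i' - x_i^*\|_p \leq \eps'$. Solving Barman's bound $k = \frac{4p\gamma^2}{(\eps')^2} = \frac{16 M^2 \lambda^2 p \gamma^2}{\eps^2}$ yields exactly the $k$ claimed in the statement, and, by construction, the event $\psi_j$ from~(\ref{ev:close}) holds deterministically for every $j \in [M]$.

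Next, I would verify the event $\phi_i$ from~(\ref{ev:payoff}) for every $i$. By the triangle inequality in the $L_p$ norm,
\begin{align*}
\|\prof' - \prof^*\|_p \leq \sum_{j=1}^{M} \|x_j' - x_j^*\|_p \leq M \cdot \frac{\eps}{2M\lambda} = \frac{\eps}{2\lambda},
\end{align*}
and since $T_i$ is $\lambda_p$-Lipschitz continuous we obtain
\begin{align*}
|T_i(x_i', \prof'_{-i}) - T_i(x_i^*, \prof^*_{-i})| \leq \lambda \cdot \|\prof' - \prof^*\|_p \leq \eps/2,
\end{align*}
which is exactly $\phi_i$. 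Invoking Lemma~\ref{lem:con-pi} with $\bigcap_{j \in [M]}\psi_j$ and $\phi_i$ in hand yields $\pi_i$ for every player $i$, i.e. no unilateral deviation from $\prof' = (x_1', \ldots, x_M')$ improves utility by more than $\eps$. Hence $\prof'$ is the required $k$-uniform $\eps$-equilibrium.

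The only delicate point is calibrating $\eps'$ so the per-player Barman approximation errors aggregate correctly into the $\eps/2$ global payoff error required by Lemma~\ref{lem:con-pi}: this forces the $M$ appearing in the denominator of $\eps'$, which in turn produces the $M^2$ factor inside $k$ after squaring. Everything else is a direct substitution into the Barman bound and the Lipschitz hypothesis.
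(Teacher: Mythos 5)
Your proof is correct and follows essentially the same route as the paper's: apply Theorem~\ref{thm:barman} per player with accuracy $\eps/(2M\lambda)$ to obtain $k$-uniform approximations of the exact equilibrium strategies, observe that the events $\psi_j$ hold, deduce $\phi_i$ from Lipschitz continuity, and invoke Lemma~\ref{lem:con-pi} to conclude $\pi_i$ for all $i$. If anything your version is slightly cleaner, since you treat Barman's theorem as the deterministic existence statement it is and spell out the $\bigcap_j \psi_j \subseteq \bigcap_i \phi_i$ step, whereas the paper phrases the final step in (somewhat misleading) probabilistic language.
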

\begin{proof}
In order to prove the claim, it suffices to show that there is a strategy 
profile where every player plays a $k$-uniform strategy, for the chosen value of
$k$, such that the events $\pi_i$ hold for all $i \in [M]$. Since the utility 
functions in \lgame are $\lambda_p$-Lipschitz continuous it holds that 
$\bigcap_{i \in [n]} \psi_i \subseteq \bigcap_{i \in [n]}\phi_i$. Furthermore, 
combining that with the Lemma~\ref{lem:con-pi} we get that 
$\bigcap_{i \in [n]} \psi_i \subseteq \bigcap_{i \in [n]}\pi_i$. 
Thus, if the event $\psi_i$ is true for every $i \in [n]$, then the event 
$\bigcap_{i \in [n]} \pi_i$ is true as well. 

From the Theorem~\ref{thm:barman} we get that for each $i \in [M]$ there is a 
$\frac{16M^2\lambda^2p\gamma^2}{\eps^2}$-uniform point $x'_i$ such that the 
event $\psi_i$ occurs with positive probability. The claim follows.
\qed
\end{proof}

Theorem~\ref{thm:existence} establishes the existence of a $k$-uniform
approximate equilibrium, but this does not immediately give us our approximation
algorithm. The obvious approach is to perform a brute force check of all
$k$-uniform strategies, and then output the one the provides the best
approximation. There is a problem with this, however, since computing the
quality of approximation requires us to compute the regret for each player,
which in turn requires us to compute a best response for each player. Computing
an exact best response in a Lipschitz game is a hard problem in general, since
we make no assumptions about the utility functions of the players. Fortunately,
it is sufficient to instead compute an \emph{approximate} best response for each
player, and  Lemma~\ref{lem:optim} can be used to do this. The following Lemma
is a consequence of Lemma~\ref{lem:optim}.

%
\begin{lemma}
\label{lem:eps-br}
Let \prof be a strategy profile for a $\lambda_p$-Lipschitz game \lgame, and let
$\hat{x}_i$ be a best response for the player $i$ against the profile 
$\prof_{-i}$. There is a $\frac{4\lambda^2p\gamma^2}{\eps^2}$-uniform strategy 
$x_i'$ that is an \eps-best response against $\prof_{-i}$, i.e. 
$|T_i(\hat{x}_i, \prof_{-i}) - T_i(x'_i, \prof_{-i})| < \eps$.
\end{lemma}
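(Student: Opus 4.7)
The plan is to apply Lemma~\ref{lem:optim} essentially as a black box to the single-player payoff function obtained by fixing the strategies of the other players. First I would define $f \colon S_i \to \reals$ by $f(x_i) := T_i(x_i, \prof_{-i})$, and observe that $S_i$ is, by the definition of a $\lambda_p$-Lipschitz game, the convex hull of $n$ vectors in $\reals^d$ with $\max_{x_i \in S_i} \|x_i\|_p \leq \gamma$. So the domain of $f$ fits directly into the setting of Lemma~\ref{lem:optim}.

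Next I would verify that $f$ inherits $\lambda_p$-Lipschitz continuity from $T_i$. For any $x_i, y_i \in S_i$, the profiles $(x_i, \prof_{-i})$ and $(y_i, \prof_{-i})$ agree on every coordinate belonging to a player $j \neq i$, so their difference is a vector supported only on the coordinates of player $i$. Hence $\|(x_i, \prof_{-i}) - (y_i, \prof_{-i})\|_p = \|x_i - y_i\|_p$, and the $\lambda_p$-Lipschitz property of $T_i$ on the full strategy profile space yields $|f(x_i) - f(y_i)| \leq \lambda \|x_i - y_i\|_p$.

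Having established both hypotheses of Lemma~\ref{lem:optim} with this $f$, I would instantiate it with $k = \frac{4\lambda^2 p \gamma^2}{\eps^2}$. By definition of a best response, $\hat{x}_i$ is the maximizer of $f$ over $S_i$, so $f(\hat{x}_i)$ is the optimum value of $f$ referenced in Lemma~\ref{lem:optim}. The lemma then produces a $k$-uniform point $x_i' \in S_i$ satisfying $|f(\hat{x}_i) - f(x_i')| < \eps$, which unfolds to $|T_i(\hat{x}_i, \prof_{-i}) - T_i(x_i', \prof_{-i})| < \eps$. This is exactly the conclusion of the lemma, so the statement follows.

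I do not expect a real obstacle here, since the lemma is essentially a restatement of Lemma~\ref{lem:optim} in game-theoretic language. The only subtlety worth spelling out carefully is the equality $\|(x_i, \prof_{-i}) - (y_i, \prof_{-i})\|_p = \|x_i - y_i\|_p$, which is what allows the Lipschitz constant $\lambda$ to be transferred from the joint profile to the single-player deviation; once this is noted, the proof reduces to a one-line application of the earlier lemma.
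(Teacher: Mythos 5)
Your proof is correct and matches the paper's intent exactly: the paper merely states that Lemma~\ref{lem:eps-br} is a consequence of Lemma~\ref{lem:optim} without spelling out the argument, and you supply the missing details, most usefully the observation that fixing $\prof_{-i}$ gives $\|(x_i, \prof_{-i}) - (y_i, \prof_{-i})\|_p = \|x_i - y_i\|_p$, so the single-player restriction $f(x_i) = T_i(x_i,\prof_{-i})$ inherits the $\lambda_p$-Lipschitz constant. The remaining instantiation of Lemma~\ref{lem:optim} with the best response $\hat{x}_i$ as the optimizer is routine, and the bound on $\gamma$ over $S_i$ dominates $\max_{x \in X}\|x\|_p$ since the $p$-norm is convex, so the uniformity constant is valid.
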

%
%

Our goal is to \emph{approximate} the approximation guarantee for a given
strategy profile. More formally, given a strategy profile $\profx$ that is an
$\epsilon$-equilibrium, and a constant $\delta > 0$, we want an algorithm that
outputs a number within the range $[\epsilon - \delta, \epsilon + \delta]$.
Lemma~\ref{lem:eps-br} allows us to do this. For a given strategy profile
$\profx$, we first compute $\delta$-approximate best responses for each player,
then we can use these to compute $\delta$-approximate regrets for each player.
The maximum over the $\delta$-approximate regrets then gives us an approximation
$\epsilon$ with a tolerance of $\delta$. This is formalised in the following
algorithm.



%
\alg{alg:approx}
\begin{tcolorbox}[title=Algorithm~\ref{alg:approx}. Evaluation of approximation guarantee]
\textbf{Input:} A strategy profile \prof for \lgame, and a constant $\delta > 0$.\\
\textbf{Output:} An additive $\delta$-approximation of the approximation guarantee $\alpha(\prof)$ for the strategy profile \prof.
\begin{enumerate}
\itemsep 1.2mm
\item Set $l = \frac{4\lambda^2p\gamma^2}{\delta^2}$.
\item For every player $i \in [M]$
\begin{enumerate}
\label{step:iter}
\item For every $l$-uniform strategy $x'_i$ of player $i$ compute 
$T_i(x'_i, \prof_{-i})$.
\label{step:maxpay}
\item Set $m^* = \max_{x'_i}T_i(x'_i, \prof_{-i})$.
\item Set $\regret_i(\prof) = m^* - T_i(x_i,\prof_{-i})$.
\label{step:regi}
\end{enumerate}
\item Set $\alpha(\prof) = \delta + \max_{i \in [M]}\regret_i(\prof)$.
\label{step:apxg}
\item Return $\alpha(\prof)$.
\end{enumerate}
\end{tcolorbox}

Utilising the above algorithm, we can now produce an algorithm to find an
approximate equilibrium in Lipschitz games. The algorithm checks all $k$-uniform
strategy profiles, using the value of $k$ given by 
Theorem~\ref{thm:existence}, and for each one, computes an approximation of the
quality approximation using the algorithm given above. 

\alg{alg:lipschitz}
\begin{tcolorbox}[title=Algorithm~\ref{alg:lipschitz}. 3\eps-equilibrium for $\lambda_p$-Lipschitz game \lgame]
\textbf{Input:} Game \lgame and $\eps>0$.\\
\textbf{Output:} An 3\eps-equilibrium for \lgame.
\begin{enumerate}
\itemsep 1.2mm
\item Set $k > \frac{16\lambda^2Mp\gamma^2}{\eps^2}$.
\label{step:qptasiter}
\item For every $k$-uniform strategy profile $\prof'$
\label{step:apxeval}
\begin{enumerate}
\item Compute an $\epsilon$-approximation of $\alpha(\prof')$.
\item If the $\epsilon$-approximation of $\alpha(\prof')$ is less than  $2\eps$, return $\prof'$.
\end{enumerate}
\end{enumerate}
\end{tcolorbox}
If the algorithm returns a strategy profile $\prof$, then it must be a $3\eps$
equilibrium. This is because we check that an $\epsilon$-approximation of
$\alpha(\prof)$ is less than $2 \eps$, and therefore $\alpha(\prof) \leq 3
\eps$. Secondly, we argue that if the game has an exact Nash equilibrium, then
this procedure will always output a $3\eps$-approximate equilibrium. From
Theorem~\ref{thm:existence} we know that if $k >
\frac{16\lambda^2Mp\gamma^2}{\eps^2}$, then there is a $k$-uniform strategy
profile $\profx$ that is an $\eps$-equilibrium for \lgame. When we apply our
approximate regret algorithm to $\profx$, to find an $\epsilon$-approximation of
$\alpha(\profx)$, the algorithm will return a number that is less than $2 \eps$,
hence $\profx$ will be returned by the algorithm.

To analyse the running time, observe that there are ${n+k-1 \choose k} =
O(n^k)$ possible $k$-uniform strategies for each player, thus $O(n^{Mk})$
$k$-uniform strategy profiles. Furthermore, our regret approximation algorithm
runs in time $O(Mn^l)$, where $l = \frac{4\lambda^2p\gamma^2}{\eps^2}$. Hence,
we get the next theorem.
\begin{theorem}
\label{thm:qptas-concave}
Given a $\lambda_p$-Lipschitz game \lgame that posses an equilibrium and any 
$\eps>0$, a 3\eps-equilibrium can be computed in time $O\left(Mn^{Mk+l}\right)$, 
where $k = O\big(\frac{\lambda^2Mp\gamma^2}{\eps^2}\big)$ and 
$l = O\big(\frac{\lambda^2p\gamma^2}{\eps^2}\big)$.
\end{theorem}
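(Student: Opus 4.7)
The plan is to directly analyse Algorithm~\ref{alg:lipschitz}. This requires two parts: showing that any profile the algorithm returns is a $3\eps$-equilibrium, and showing that the algorithm will in fact return something (given the existence of an exact equilibrium).

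First I would verify the correctness of Algorithm~\ref{alg:approx} as an additive $\delta$-approximator of the true approximation guarantee $\alpha(\prof)$. For each player $i$, let $\hat{x}_i$ be the true best response against $\prof_{-i}$ and let $x'_i$ maximise $T_i(\cdot,\prof_{-i})$ over $l$-uniform strategies with $l=4\lambda^2p\gamma^2/\delta^2$. By Lemma~\ref{lem:eps-br}, there is an $l$-uniform $\delta$-best response, so $T_i(x'_i,\prof_{-i}) \in [T_i(\hat{x}_i,\prof_{-i})-\delta,\, T_i(\hat{x}_i,\prof_{-i})]$. Consequently the computed $\regret_i(\prof)$ is in $[\alpha_i(\prof)-\delta,\alpha_i(\prof)]$ where $\alpha_i$ is the true regret, so $\alpha(\prof) = \delta + \max_i \regret_i(\prof)$ lies in $[\alpha_{\mathrm{true}}(\prof),\alpha_{\mathrm{true}}(\prof)+\delta]$, giving the desired tolerance.

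Using this, I would then run Algorithm~\ref{alg:lipschitz} with the choice $k > 16\lambda^2Mp\gamma^2/\eps^2$ and invoke Algorithm~\ref{alg:approx} with tolerance $\delta=\eps$ inside its loop. For soundness, if the algorithm outputs $\prof'$ then the returned value is at most $2\eps$, and by the tolerance bound the true approximation guarantee is at most $3\eps$, so $\prof'$ is a $3\eps$-equilibrium. For completeness, Theorem~\ref{thm:existence} guarantees the existence of a $k$-uniform $\eps$-equilibrium $\profx^\dagger$ when an exact equilibrium exists; for this profile the true approximation is at most $\eps$, so the value returned by Algorithm~\ref{alg:approx} is at most $2\eps$, and the enumeration will find $\profx^\dagger$ (or an earlier profile that also satisfies the threshold).

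For the running time, I would simply count: the number of $k$-uniform strategies per player is $\binom{n+k-1}{k} = O(n^k)$, so the outer enumeration sweeps $O(n^{Mk})$ profiles. Each call to Algorithm~\ref{alg:approx} runs in $O(Mn^l)$ time since it enumerates $O(n^l)$ $l$-uniform strategies for each of the $M$ players. Multiplying yields $O(Mn^{Mk+l})$. The main obstacle I expect is purely conceptual rather than technical: being careful that the $\eps$-tolerance inside Algorithm~\ref{alg:approx} combines correctly with the $2\eps$ threshold to yield a true $3\eps$ bound, and that $k$ is chosen strictly larger than the existential bound from Theorem~\ref{thm:existence} so that a witnessing $k$-uniform $\eps$-equilibrium is guaranteed to appear in the enumeration.
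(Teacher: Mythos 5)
Your proposal matches the paper's own argument essentially step for step: soundness via the output threshold of Algorithm~\ref{alg:approx}, completeness via the $k$-uniform witness guaranteed by Theorem~\ref{thm:existence}, and the running time by multiplying the $O(n^{Mk})$ outer enumeration by the $O(Mn^l)$ cost of each regret-approximation call. The only difference is that you spell out the one-sided tolerance of Algorithm~\ref{alg:approx} (it only overestimates the true regret), which is a slight sharpening the paper leaves implicit but does not change the conclusion.
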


Notice that in might be computationally hard to decide whether a game posses an
equilibrium or not. Nevertheless, our algorithm can be applied in \emph{any} 
$\lambda_p$-Lipschitz game, without being affected by the existence or not of an
exact equilibrium. If the game does not posses an exact equilibrium then our 
algorithm either finds an approximate equilibrium or decides that there is no 
$k$-uniform strategy profile that is an \eps-equilibrium for the game, thus the 
game does not posses an exact equilibrium.

\begin{theorem}
\label{thm:qptas-general}
For any game $\lambda_p$-Lipschitz game \lgame in time $O\left(Mn^{Mk+l}\right)$, 
we can either compute a $3\eps$-equilibrium, or decide that \lgame does not posses an 
exact equilibrium, where $k = O\big(\frac{\lambda^2Mp\gamma^2}{\eps^2}\big)$ and 
$l = O\big(\frac{\lambda^2p\gamma^2}{\eps^2}\big)$.
\end{theorem}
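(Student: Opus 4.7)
The plan is to run Algorithm~\ref{alg:lipschitz} verbatim, without ever invoking the assumption that an exact equilibrium exists, and then argue separately about the two possible outcomes. The algorithm enumerates all $k$-uniform strategy profiles for $k = \Theta(\lambda^2 M p \gamma^2 / \eps^2)$, and for each one invokes Algorithm~\ref{alg:approx} with tolerance $\delta = \eps$ to obtain an $\eps$-approximation of $\alpha(\prof')$; it returns the first profile whose approximate evaluation is at most $2\eps$, and reports ``no exact equilibrium'' if every profile fails this test.

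First I would verify the ``found'' branch. If Algorithm~\ref{alg:lipschitz} returns some profile $\prof'$, then by construction the value produced by Algorithm~\ref{alg:approx} is at most $2\eps$; since Lemma~\ref{lem:eps-br} guarantees that this value differs from the true approximation guarantee $\alpha(\prof')$ by at most $\delta = \eps$, we conclude $\alpha(\prof') \le 3\eps$, and hence $\prof'$ is genuinely a $3\eps$-equilibrium of \lgame. Importantly, this argument is entirely independent of whether \lgame admits an exact equilibrium, so the output is always correct when the algorithm halts with a profile.

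Next I would handle the ``reported no equilibrium'' branch by contrapositive of Theorem~\ref{thm:existence}. Suppose that \lgame does have an exact equilibrium. Then Theorem~\ref{thm:existence} guarantees the existence of a $k$-uniform strategy profile $\prof^\star$ that is an $\eps$-equilibrium, i.e.\ with $\alpha(\prof^\star) \le \eps$. When Algorithm~\ref{alg:approx} is applied to $\prof^\star$ with tolerance $\delta = \eps$, it outputs a value in the range $[\alpha(\prof^\star) - \eps, \alpha(\prof^\star) + \eps] \subseteq [0, 2\eps]$, so the test in Algorithm~\ref{alg:lipschitz} succeeds on $\prof^\star$ and the algorithm cannot reach the ``report no equilibrium'' branch. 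Contrapositively, if the algorithm reports that no equilibrium exists, then indeed no exact equilibrium can exist.

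Finally, the running time bound is inherited unchanged from the analysis preceding Theorem~\ref{thm:qptas-concave}: there are $O(n^{Mk})$ candidate profiles, each evaluated in time $O(M n^l)$ with $l = O(\lambda^2 p \gamma^2 / \eps^2)$, yielding total running time $O(M n^{Mk + l})$. I do not expect any step to be a serious obstacle; the only subtle point is keeping the $\eps$ vs.\ $2\eps$ vs.\ $3\eps$ accounting consistent between the additive tolerance $\delta$ of Algorithm~\ref{alg:approx} and the internal threshold of Algorithm~\ref{alg:lipschitz}, so that the ``found'' branch certifies a $3\eps$-equilibrium while the ``not found'' branch rules out every $k$-uniform $\eps$-equilibrium and therefore, via Theorem~\ref{thm:existence}, every exact equilibrium.
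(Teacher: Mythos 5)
Your proposal is correct and follows the same approach as the paper: run Algorithm~\ref{alg:lipschitz} unconditionally, observe that the ``found'' branch certifies a $3\eps$-equilibrium purely from the accounting of the $\eps$-tolerance in Algorithm~\ref{alg:approx} (independent of whether an exact equilibrium exists), and argue the ``not found'' branch by the contrapositive of Theorem~\ref{thm:existence}. The only nuance worth noting is that Algorithm~\ref{alg:approx} actually returns a one-sided overestimate in $(\hat{\alpha}, \hat{\alpha}+\delta]$ rather than a symmetric $\pm\delta$ band, but this does not affect your bookkeeping: the threshold $2\eps$ still implies $\hat{\alpha} < 2\eps \le 3\eps$ on success, and $\hat{\alpha} \le \eps$ still guarantees the reported value is at most $2\eps$, so the contrapositive goes through exactly as you describe.
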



\section{A quasi-polynomial algorithm for penalty games}
\label{sec:qptas}

In this section we present an algorithm that, for any $\eps>0$, can compute an
\eps-equilibrium for any penalty game in $\plip$ in quasi-polynomial time. For
the algorithm, we take the same approach as we did in the previous section for
Lipschitz games: We show that if an exact equilibrium exists, then a $k$-uniform
approximate equilibrium always exists too, and provide a brute-force search 
algorithm for finding it. Once again,
since best response computation may be hard for this class of games, we must
provide an approximation algorithm for finding the quality of an approximate
equilibrium. The majority of this section is dedicated to proving an appropriate
bound for $k$, to ensure that $k$-uniform approximate equilibria always exist.



We first focus on penalty games that posses an exact equilibrium.
So, let $(\profx^*, \profy^*)$ be an equilibrium of the game and let 
$(\profx', \profy')$ be a $k$-uniform strategy profile sampled from this 
equilibrium. We define the following four events:
\begin{align*}
\phi_r  = & \big\{|T_r(\profx', \profy') - T_r(\profx^*, \profy^*)| < \eps/2 \big\} \\
\pi_r = & \big\{ T_r(\profx, \profy') < T_r(\profx', \profy') + \eps \big \} \qquad \text{for all \profx}\\
\phi_c  = & \big\{|T_c(\profx', \profy') - T_c(\profx^*, \profy^*)| < \eps/2 \big\} \\
\pi_c = & \big\{ T_c(\profx', \profy) < T_c(\profx', \profy') + \eps \big \} \qquad \text{for all \profy}.
\end{align*}
The goal is to derive a value for $k$ such that all the four events above are
true, or equivalently $Pr(\phi_r \cap \pi_r \cap \phi_c \cap \pi_r) > 0$.


Note that in order to prove that $(\profx', \profy')$ is an \eps-equilibrium we 
\emph{only} have to consider the events $\pi_r$ and $\pi_c$. Nevertheless, as 
we show in the Lemma~\ref{lem:intersection}, the events $\phi_r$ and $\phi_c$ 
are crucial in our analysis. The proof of the main theorem boils down to the the
events $\phi_r$ and $\phi_c$. Furthermore, proving that there is a $k$-uniform 
profile $(\profx', \profy')$ that fulfills the events $\phi_r$ and $\phi_c$ too, 
proves that the approximate equilibrium we compute approximates the utilities 
the players receive under an exact equilibrium too. 

In what follows we will focus only on the row player, since similar analysis can 
be applied for the column player too. Firstly we study the event $\pi_r$ and we 
show how we can relate it with the event $\phi_r$.  
\begin{lemma}
\label{lem:intersection}
For all penalty games 
it holds that $Pr(\pi_r^c) \leq n \cdot e^{-\frac{k\eps^2}{2}}+ Pr(\phi_r^c)$.
\end{lemma}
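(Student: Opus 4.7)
The plan is to decompose the event $\pi_r^c$ through the intermediate equilibrium quantity $T_r(\profx^*, \profy^*)$, then reduce what remains to a linear expression in $\profx$ that can be handled by Hoeffding's inequality on each row of $R$. Writing $T_r(\profx, \profy') - T_r(\profx', \profy') = [T_r(\profx, \profy') - T_r(\profx^*, \profy^*)] + [T_r(\profx^*, \profy^*) - T_r(\profx', \profy')]$, the event $\phi_r$ bounds the second bracket in absolute value by $\eps/2$. Thus on $\phi_r$, the event $\pi_r^c$ forces some $\profx$ to satisfy $T_r(\profx, \profy') - T_r(\profx^*, \profy^*) > \eps/2$, and it suffices to bound the probability of this latter event.

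The next step is to use the equilibrium inequality at $(\profx^*, \profy^*)$ to cancel the penalty contributions. Expanding the definition of $T_r$ and regrouping gives
\begin{align*}
T_r(\profx, \profy') - T_r(\profx^*, \profy^*) = \profx^T R(\profy' - \profy^*) + \bigl[(\profx - \profx^*)^T R \profy^* - (\ff_r(\profx) - \ff_r(\profx^*))\bigr].
\end{align*}
Because $\profx^*$ is a best response to $\profy^*$ in the penalty game, the bracketed term is non-positive, so the whole expression is dominated by $\profx^T R(\profy' - \profy^*)$. This quantity is linear in $\profx$, hence its supremum over the simplex equals $\max_{i \in [n]} R_i(\profy' - \profy^*)$, where $R_i$ denotes the $i$-th row of $R$.

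The final step is a concentration argument on each row. Since $\profy'$ is the empirical average of $k$ i.i.d.\ draws from $\profy^*$ and the entries of $R$ lie in $[0,1]$, Hoeffding's inequality gives $\Pr(R_i(\profy' - \profy^*) > \eps/2) \leq e^{-k\eps^2/2}$ for each $i$, and a union bound over the $n$ rows yields $\Pr(\max_i R_i(\profy' - \profy^*) > \eps/2) \leq n\, e^{-k\eps^2/2}$. Combining this with the inclusion established in the first paragraph and the trivial decomposition $\Pr(\pi_r^c) \leq \Pr(\pi_r^c \cap \phi_r) + \Pr(\phi_r^c)$ delivers the claimed bound.

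The main obstacle is the penalty term $\ff_r$, which breaks the usual LMM argument because a best response to $\profy'$ is no longer guaranteed to be pure. The key idea is that comparing the deviation at $\profx$ not to $T_r(\profx', \profy')$ directly but via the equilibrium value $T_r(\profx^*, \profy^*)$ lets the equilibrium inequality absorb the penalty terms, so that only the bilinear part survives and the standard Hoeffding-plus-union-bound pattern applies.
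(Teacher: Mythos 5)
Your proof is correct and follows essentially the same route as the paper's: you compare against the equilibrium value $T_r(\profx^*, \profy^*)$ via $\phi_r$, use the optimality of $\profx^*$ against $\profy^*$ to cancel the penalty contributions, and reduce what remains to a row-wise bound on $R(\profy' - \profy^*)$ handled by Hoeffding plus a union bound. The paper packages the row-wise concentration step as the auxiliary events $\psi_{ri}$ and verifies the inclusion $\phi_r \cap \bigcap_i \psi_{ri} \subseteq \pi_r$, which is the same argument in a slightly different order.
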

\begin{proof}
We begin by introducing the following auxiliary events for all $i \in [n]$
\begin{align*}
\psi_{ri} = \big\{R_i\profy' < R_i\profy^* + \frac{\eps}{2} \big\}.
\end{align*}
We prove how the events $\psi_{ri}$ and the event $\phi_r$ are related with 
the event $\pi_r$.
Assume that the event $\phi_r$ and the events $\psi_{ri}$ for all $i \in [n]$ 
are true . 
Let \profx be any mixed strategy for the row player. 
Since by assumption $R_i\profy' < R_i\profy^* + \frac{\eps}{2}$ and since \profx
is a probability distribution, it holds that
$\profx^TR\profy' < \profx^TR\profy^* + \frac{\eps}{2}$. 
If we subtract $\ff_r(\profx)$ from each side we get that 
$\profx^TR\profy' - \ff_r(\profx) < \profx^TR\profy^* - \ff_r(\profx) + \frac{\eps}{2}$.
This means that $T_r(\profx, \profy') < T_r(\profx, \profy^*) + \frac{\eps}{2}$
for all \profx. But we know that 
$T_r(\profx, \profy^*) \leq T_r(\profx^*, \profy^*)$ for all $\profx \in \Delta^n$,
since $(\profx^*, \profy^*)$ is an equilibrium. Thus, we get that 
$T_r(\profx, \profy') < T_r(\profx^*, \profy^*) + \frac{\eps}{2}$ for all possible
\profx. Furthermore, since the event $\phi_r$ is true too, we get that 
$T_r(\profx, \profy') < T_r(\profx', \profy') + \eps$. Thus, if the events 
$\phi_r$ and $\psi_{ri}$ for all $i \in [n]$ are true, then the event $\pi_r$ 
must be true as well. 
Formally, $\phi_r \bigcap_{i \in [n]} \psi_{ri} \subseteq \pi_r$. 
Thus, $Pr(\pi_r^c) \leq Pr(\phi_r^c) + \sum_i \psi_{ri}$.
Using the Hoeffding bound, we get that $Pr(\psi_{ri}^c) \leq e^{-\frac{k\eps^2}{2}}$ 
for all $i \in [n]$. 
Our claim follows.
\qed
\end{proof}
With Lemma~\ref{lem:intersection} in hand, we can see that in order to compute
a value for $k$ it is sufficient to study the event $\phi_r$. We introduce the
following auxiliary events that we will study seperately:
\begin{align*}
\phi_{ru} & = \big\{ |\profx'^TR\profy' - \profx^{*^T}R\profy^*| < \eps/4 \big\}\\
\phi_{r\bbb} & = \big\{ |\ff_r(\profx') - \ff_r(\profx^*)| < \eps/4 \big\}.
\end{align*}
It is easy to see that if both $\phi_{r\bbb}$ and $\phi_{ru}$are true, then the
event $\phi_r$ must be true too, formally 
$\phi_{r\bbb} \cap \phi_{ru} \subseteq \phi_r$. 
Using the analysis from~\cite{LMM} we can prove that 
$Pr(\phi_{ru}^c) \leq 2e^{-\frac{k\eps^2}{8}}$. Thus, it remains to study the 
the event $\phi^c_{r\bbb}$. 
\begin{lemma}
\label{lem:rbp-bound}
$\Pr(\phi^c_{r\bbb}) \leq \frac{8\lambda\sqrt{p}}{\eps\sqrt{k}}$.
\end{lemma}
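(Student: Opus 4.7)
The plan is to peel off the penalty function using Lipschitz continuity, reducing the problem to controlling $\|\profx'-\profx^*\|_p$, and then combine Markov's inequality with a Khintchine-type bound on the expected $L_p$ deviation of the empirical distribution $\profx'$ from its mean $\profx^*$. By Definition~\ref{def:lip} applied to $\ff_r$, we have $|\ff_r(\profx')-\ff_r(\profx^*)|\leq \lambda\|\profx'-\profx^*\|_p$, and hence $\phi^c_{r\bbb} \subseteq \{\|\profx'-\profx^*\|_p \geq \eps/(4\lambda)\}$. Markov's inequality then gives $\Pr(\phi^c_{r\bbb}) \leq (4\lambda/\eps)\,\mathbb{E}\|\profx'-\profx^*\|_p$, so it suffices to show $\mathbb{E}\|\profx'-\profx^*\|_p \leq 2\sqrt{p/k}$.

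To bound this expectation, I would write the $k$-uniform profile as an empirical mean: letting $I_1,\ldots,I_k$ be independent pure strategies drawn from $\profx^*$ and $e_{I_j}$ denote the corresponding unit coordinate vectors, $\profx' = k^{-1}\sum_j e_{I_j}$, and so $Z := \profx'-\profx^* = k^{-1}\sum_j Y_j$ with $Y_j = e_{I_j}-\profx^*$ independent, mean zero, with $i$-th coordinate variance $\sigma_i^2 = \profx^*_i(1-\profx^*_i)$ satisfying $\sum_i \sigma_i^2 \leq 1$. Since $p\geq 2$, Jensen's inequality on $x\mapsto x^p$ yields
\begin{align*}
\mathbb{E}\|Z\|_p \leq \bigl(\mathbb{E}\|Z\|_p^p\bigr)^{1/p} = \frac{1}{k}\Bigl(\sum_i \mathbb{E}\Bigl|\sum_j (Y_j)_i\Bigr|^p\Bigr)^{1/p},
\end{align*}
reducing everything to a one-dimensional moment estimate.

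The main step, and the place where the $\sqrt{p}$ factor appears, is a coordinatewise Khintchine/Marcinkiewicz--Zygmund inequality: for each $i$, the independent mean-zero summands $(Y_j)_i \in [-1,1]$ satisfy $\bigl(\mathbb{E}|\sum_j (Y_j)_i|^p\bigr)^{1/p} \leq C\sqrt{pk}\,\sigma_i$ for a universal constant $C$. Substituting and using $p\geq 2$ together with $\sigma_i\leq 1/2$ (so that $\sigma_i^p \leq \sigma_i^2$) gives
\begin{align*}
\mathbb{E}\|Z\|_p \leq \frac{C\sqrt{pk}}{k}\Bigl(\sum_i \sigma_i^p\Bigr)^{1/p} \leq \frac{C\sqrt{p}}{\sqrt{k}}\Bigl(\sum_i \sigma_i^2\Bigr)^{1/p} \leq \frac{C\sqrt{p}}{\sqrt{k}},
\end{align*}
and taking $C\leq 2$ (the sharp constant for bounded symmetric sums) yields exactly $\Pr(\phi^c_{r\bbb})\leq 8\lambda\sqrt{p}/(\eps\sqrt{k})$.

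The main obstacle is tracking the constant in the Khintchine inequality carefully enough to land on $C\leq 2$; an alternative route that sidesteps this delicacy is to invoke $\|Z\|_p\leq \|Z\|_2$ (valid for $p\geq 2$) and then Jensen plus a second-moment computation, which actually delivers the stronger $\mathbb{E}\|Z\|_p\leq 1/\sqrt{k}$ but does not match the $\sqrt{p}$ shape of the bound as stated in the lemma.
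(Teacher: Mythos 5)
Your overall architecture is exactly the paper's: use Lipschitz continuity to replace $\phi^c_{r\bbb}$ by the event $\{\|\profx'-\profx^*\|_p \geq \eps/(4\lambda)\}$, then Markov's inequality, then a bound on $\mathbb{E}\|\profx'-\profx^*\|_p$. The difference is that the paper simply cites the proof of Theorem~2 of~\cite{B15} for $\mathbb{E}\|\profx'-\profx^*\|_p \leq 2\sqrt{p}/\sqrt{k}$, whereas you attempt to rederive it; unfortunately, the rederivation contains a genuine gap.

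The problem is the ``coordinatewise Khintchine/Marcinkiewicz--Zygmund'' bound $\bigl(\mathbb{E}\bigl|\sum_j (Y_j)_i\bigr|^p\bigr)^{1/p}\leq C\sqrt{pk}\,\sigma_i$ with a universal constant~$C$. Khintchine applies to Rademacher sums with deterministic coefficients; for sums of bounded i.i.d.\ mean-zero variables the correct moment estimate is Rosenthal/Bernstein-type and carries an additive term coming from the range, roughly $\bigl(\mathbb{E}\bigl|\sum_j X_j\bigr|^p\bigr)^{1/p}\lesssim \sqrt{pk}\,\sigma + p$. This extra $p$ cannot be absorbed into $\sqrt{pk}\,\sigma$ when $\sigma$ is small: already for $k=1$ the quantity $\bigl(\mathbb{E}|Y_i|^p\bigr)^{1/p}$ tends to a positive constant as $p\to\infty$ while $\sqrt{p}\,\sigma_i$ stays of order $\sqrt{p}\,\sqrt{\profx^*_i}$, which is tiny if $\profx^*_i$ is; so no universal $C$ works, and the subsequent constant-tracking step is moot. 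What makes Barman's argument go through is a symmetrization step first (doubling the expectation and replacing $Y_j$ by $\epsilon_j e_{I_j}$), then applying Khintchine \emph{conditionally} on the samples, where the coefficients are now genuinely deterministic, and finally using convexity of $x\mapsto x^{p/2}$ together with $\sum_i n_i=k$ to control $\sum_i n_i^{p/2}\le k^{p/2}$. It is the global constraint on the count vector, not a per-coordinate variance bound, that produces the factor $\sqrt{pk}$ with a universal constant.

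Ironically, the observation you relegate to your last paragraph is a correct and simpler route for this specific setting: since $\profx'-\profx^*$ lives in the probability simplex difference set and $p\geq 2$, one has $\|\profx'-\profx^*\|_p\leq \|\profx'-\profx^*\|_2$, so Jensen and the second-moment computation $\mathbb{E}\|\profx'-\profx^*\|_2^2 = \tfrac{1}{k}\sum_i \profx^*_i(1-\profx^*_i)\leq \tfrac{1}{k}$ give $\mathbb{E}\|\profx'-\profx^*\|_p\leq 1/\sqrt{k}$, which is even stronger than the $2\sqrt{p/k}$ the paper uses. (The $\sqrt{p}$ in Barman's theorem is needed for his general convex-hull setting where the hull vertices are arbitrary vectors with bounded $L_p$ norm; here the vertices are standard basis vectors with $\|e_i\|_2=1$, so the $L_2$ shortcut is available.) You should promote this alternative to the main argument and discard the Khintchine step, rather than treat it as a curiosity; it is correct, elementary, and yields the stated bound with room to spare.
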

\begin{proof}
Since we assume that the penalty function $\ff_r(\profx')$ is $\lambda_p$-Lipschitz
continuous the event $\phi_{r\bbb}$ can be replaced by the event 
$\phi_{r\bbb'} = \big\{ \|\profx' - \profx^*\|_p < \eps/4\lambda \big\}$. It is
easy to see that $\phi_{r\bbb} \subseteq \phi_{r\bbb'}$. Then, using the proof 
of Theorem~2 from~\cite{B15} we get that 
$E[\|\profx' - \profx^*\|_p] \leq \frac{2\sqrt{p}}{\sqrt{k}}$. Thus, using 
Markov's inequality we get that 
\begin{align*}
Pr(\| \profx' - \profx^* \|_p \geq \frac{\eps}{4\lambda}) & \leq 
\frac{E[\|\profx' - \profx^*\|_p]}{\frac{\eps}{4\lambda}} \\
& \leq \frac{8\lambda\sqrt{p}}{\eps \sqrt{k}}.
\end{align*}
\qed
\end{proof}

%
%

%
We are ready to prove our theorem
\begin{theorem}
\label{thm:qptas1}
For any equilibrium $(\profx^*, \profy^*)$ of a penalty game from the class 
$\plip$, any $\eps>0$, and any $k \in \frac{\Omega(\lambda^2 \log n)}{\eps^2}$, 
there exists a $k$-uniform strategy profile $(\profx', \profy')$ that:
\begin{enumerate}
\item $(\profx', \profy')$ is an \eps-equilibrium for the game,
\item $|T_r(\profx', \profy') - T_r(\profx^*, \profy^*)| < \eps/2$,
\item $|T_c(\profx', \profy') - T_c(\profx^*, \profy^*)| < \eps/2$.
\end{enumerate}
\end{theorem}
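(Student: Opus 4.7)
The plan is to use the probabilistic method: sample a $k$-uniform profile $(\profx', \profy')$ by drawing $k$ coordinates i.i.d.\ from $\profx^*$ and $\profy^*$ respectively, and show that with strictly positive probability all four events $\phi_r, \pi_r, \phi_c, \pi_c$ hold simultaneously. By the way these events are defined, their joint occurrence directly yields all three conclusions of the theorem: $\phi_r$ and $\phi_c$ give items (2) and (3), while $\pi_r$ and $\pi_c$ give item (1) (the $\eps$-equilibrium property).

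The first step is to assemble a clean bound on $\Pr(\phi_r^c)$. Using the containment $\phi_{r\bbb}\cap \phi_{ru} \subseteq \phi_r$ noted in the text together with a union bound, I obtain
\begin{equation*}
\Pr(\phi_r^c) \;\leq\; \Pr(\phi_{ru}^c) + \Pr(\phi_{r\bbb}^c) \;\leq\; 2e^{-k\eps^2/8} + \frac{8\lambda\sqrt{p}}{\eps\sqrt{k}},
\end{equation*}
where the first term uses the LMM-style Hoeffding estimate already cited, and the second term is Lemma~\ref{lem:rbp-bound}. Plugging this into Lemma~\ref{lem:intersection} gives
\begin{equation*}
\Pr(\pi_r^c) \;\leq\; n\,e^{-k\eps^2/2} + 2e^{-k\eps^2/8} + \frac{8\lambda\sqrt{p}}{\eps\sqrt{k}}.
\end{equation*}

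The second step is simply to observe that the column player's situation is exactly symmetric: replacing $R$ by $C$, $\ff_r$ by $\ff_c$, and $\profx$ by $\profy$ in Lemma~\ref{lem:intersection}, Lemma~\ref{lem:rbp-bound}, and the $\phi_{ru}$ bound yields the identical estimate for $\Pr(\pi_c^c)$ and $\Pr(\phi_c^c)$. A final union bound over the four bad events therefore gives
\begin{equation*}
\Pr\bigl((\phi_r \cap \pi_r \cap \phi_c \cap \pi_c)^c\bigr) \;\leq\; 2n\,e^{-k\eps^2/2} + 8e^{-k\eps^2/8} + \frac{32\lambda\sqrt{p}}{\eps\sqrt{k}}.
\end{equation*}

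The last step is to choose $k$. The two exponential terms drop below any fixed constant (say $1/6$) once $k = \Omega(\log n / \eps^2)$, and the Markov-type term drops below $1/6$ once $k = \Omega(\lambda^2 p / \eps^2)$. Taking $k = \Omega(\lambda^2 \log n / \eps^2)$ (with the implicit constant absorbing $p$ and the numerical factors) makes the total failure probability strictly less than $1$, so some realization of $(\profx', \profy')$ satisfies all four events, proving the existence claim. I do not expect a genuine obstacle here, since all the heavy work is in Lemma~\ref{lem:intersection} and Lemma~\ref{lem:rbp-bound}; the only mildly delicate point is verifying that the slowest-decaying term, the $1/\sqrt{k}$ contribution from the Lipschitz penalty, can indeed be driven below a constant without spoiling the $\Omega(\lambda^2 \log n/\eps^2)$ rate, which it does because that rate already dominates $\lambda^2 p/\eps^2$ for any fixed $p$.
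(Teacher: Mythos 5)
Your proof is correct and follows essentially the same route as the paper's own proof: you define the four events, decompose $\phi_r$ into $\phi_{ru}$ and $\phi_{r\bbb}$, bound each piece via the LMM/Hoeffding estimate and Lemma~\ref{lem:rbp-bound}, feed these into Lemma~\ref{lem:intersection}, appeal to row/column symmetry, and close with a union bound and a choice of $k$ that kills all three terms. The only (harmless) differences are cosmetic rearrangement of the algebra and constants, and your explicit remark about absorbing $p$ into the $\Omega$-constant is in fact more careful than the paper's terse ``for the chosen value of $k$.''
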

\begin{proof}
Let us define the  event $GOOD =  \phi_r \cap \phi_c \cap \pi_{r} \cap \pi_{c}$.
In order to prove our theorem it suffices to prove that $Pr(GOOD)>0$.
Notice that for the events $\phi_c$ and $\pi_c$ we can use the same analysis as 
for $\phi_r$ and $\pi_r$ and get the same bounds. 

Thus, using 
Lemma~\ref{lem:intersection} and the analysis for the events $\phi_{ru}$ and 
$\phi_{r\bbb}$ we get that
\begin{align*}
Pr(GOOD^c) & \leq Pr(\phi_r^c) + Pr(\pi_r^c) + Pr(\phi_c^c) + Pr(\pi_c^c)\\
 & \leq 2\big( Pr(\phi_r^c) + Pr(\pi_r^c) \big)\\
 & \leq 2\big( 2 Pr(\phi_r^c) + n\cdot e^{-\frac{k\eps^2}{2}} \big) \quad \text{(from Lemma~\ref{lem:intersection})}\\
 & \leq 2\big( 2 Pr(\phi_{ru}^c) + 2Pr(\phi_{r\bbb'}^c) + n\cdot e^{-\frac{k\eps^2}{2}} \big)\\
 & \leq 2\big( 4e^{-\frac{k\eps^2}{8}} + \frac{8\lambda\sqrt{p}}{\eps\sqrt{k}} + 
n\cdot e^{-\frac{k\eps^2}{2}} \big)  \quad \text{(from Lemma~\ref{lem:rbp-bound})}\\
 & < 1 \quad \text{for the chosen value of $k$}.
\end{align*}
Thus, $Pr(GOOD)>0$ and our claim follows.
\qed
\end{proof}

The Theorem~\ref{thm:qptas1} establishes the \emph{existence} of a $k$-uniform
strategy profile $(\profx', \profy')$ that is an $\eps$-equilibrium. However, as
with the previous section, we must provide an efficient method for approximating
the quality of approximation provided by a given strategy profile. To do so, we
first give the following lemma, which shows that approximate best responses can
be computed in quasi-polynomial time for penalty games.


\begin{lemma}
\label{lem:cruc}
Let $(\profx, \profy)$ be a strategy profile for a penalty game $\plip$, and let 
$\hat{\profx}$ be a best response against \profy. There is an $l$-uniform 
strategy $\profx'$, with $l = \frac{17\lambda^2\sqrt{p}}{\eps^2}$, that is an 
\eps-best response against \profy, i.e. 
$T_r(\hat{\profx}, \profy) < T_r(\profx', \profy) + \eps$.
\end{lemma}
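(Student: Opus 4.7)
The plan is to mirror the probabilistic-method approach used in Theorem~\ref{thm:qptas1}, but applied only to the row player facing a fixed mixed strategy $\profy$ of the column player. Specifically, I would sample $l$ independent pure strategies $i_1, \ldots, i_l$ from the distribution $\hat{\profx}$ and let $\profx' = \frac{1}{l}\sum_{j=1}^{l} e_{i_j}$ be the resulting $l$-uniform strategy. The goal is to show that for the stated value of $l$, with strictly positive probability, $\profx'$ is an $\eps$-best response against $\profy$; the existence claim then follows.

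The key observation is the decomposition
\begin{align*}
T_r(\hat{\profx}, \profy) - T_r(\profx', \profy) = (\hat{\profx} - \profx')^T R \profy + \bigl( \ff_r(\profx') - \ff_r(\hat{\profx}) \bigr),
\end{align*}
which splits the regret into a bilinear game-payoff term and a penalty term. I would handle them via two separate events, asking that each is at most $\eps/2$ in absolute value, and then combine them with a union bound. For the game-payoff term, the sampling procedure ensures $\profx'$ is an unbiased estimator of $\hat{\profx}$, so exactly the Hoeffding-style bound from~\cite{LMM} applies and gives $\Pr\bigl(|(\hat{\profx} - \profx')^T R \profy| \geq \eps/2\bigr) \leq 2 e^{-l\eps^2/8}$. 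For the penalty term, I would invoke the $\lambda_p$-Lipschitz hypothesis on $\ff_r$ to reduce the question to controlling $\|\profx' - \hat{\profx}\|_p$, and then use the expectation bound $\mathbb{E}\bigl[\|\profx' - \hat{\profx}\|_p\bigr] \leq \tfrac{2\sqrt{p}}{\sqrt{l}}$ from Theorem~2 of~\cite{B15}, exactly as in the proof of Lemma~\ref{lem:rbp-bound}. Markov's inequality then yields $\Pr\bigl(|\ff_r(\profx') - \ff_r(\hat{\profx})| \geq \eps/2\bigr) \leq \tfrac{4\lambda\sqrt{p}}{\eps\sqrt{l}}$. Plugging the chosen value of $l$ into these two bounds makes their sum strictly less than one, so both events hold simultaneously for some realization of the sample, which is the desired $\profx'$.

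The main technical obstacle is the penalty term: one cannot bound $\ff_r(\profx') - \ff_r(\hat{\profx})$ directly by Hoeffding, because $\ff_r$ is a general (non-linear) Lipschitz function of the sampled coordinates rather than a linear functional of $\profx'$. This is exactly why the argument must route through Barman's bound on the expected $L_p$-distance between the sampled empirical distribution and $\hat{\profx}$, and why the rate in $l$ for this term is polynomial rather than exponential; this is also what ultimately dictates the specific $\lambda^2 \sqrt{p}/\eps^2$ shape of the bound on $l$. Once both tail bounds are in place, the rest of the argument is a standard union-bound combination.
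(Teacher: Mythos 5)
Your proposal matches the paper's proof essentially step for step: both sample $l$ pure strategies from $\hat{\profx}$, split the regret into the bilinear payoff term (bounded via the LMM/Hoeffding analysis) and the penalty term (bounded via the Lipschitz hypothesis, Barman's expectation bound $\mathbb{E}\bigl[\|\profx' - \hat{\profx}\|_p\bigr] \leq 2\sqrt{p}/\sqrt{l}$, and Markov), and finish with a union bound. The paper's proof does exactly this by reusing Lemma~\ref{lem:rbp-bound} (noting it never used that $\profx^*$ was an equilibrium), so your route is the same one the authors take.
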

\begin{proof}
We will prove that $|T_r(\hat{\profx}, \profy) - T_r(\profx', \profy)| < \eps$
which implies our claim. Let 
$\phi_1 = \{|\hat{\profx}^TR\profy - \profx'^TR\profy| \leq \eps/2\}$ and 
$\phi_2 = \{|\ff_r(\hat{\profx}) - \ff_r(\profx')| < \eps/2  \}$
Notice that Lemma~\ref{lem:rbp-bound} does not use anywhere the fact that 
$\profx^*$ is an equilibrium strategy, thus it holds even if $\prof^*$ is 
replaced by $\hat{\profx}$. Thus, 
$Pr(\phi_2^c) \leq \frac{4\lambda\sqrt{p}}{\eps\sqrt{k}}$. Furthermore, using
the analysis from~\cite{LMM} again, we can prove that $Pr(\phi_1^c) \leq 
2e^{-\frac{k\eps^2}{4}}$ and using similar arguments as in the proof of 
Theorem~\ref{thm:qptas1} it can be easily proved that for the chosen of $l$ it
holds that $Pr(\phi_1^c) + Pr(\phi_2^c) < 1$, thus the events $\phi_1$ and
 $\phi_2$ occur with positive probability and our claim follows.
\qed
\end{proof}

Having given this Lemma, we can reuse Algorithm~\ref{alg:approx}, but with $l$
set equal to $\frac{17\lambda^2\sqrt{p}}{\eps^2}$, to provide an algorithm that
aproximates the quality of approximation of a given strategy profile. Then, we 
can reuse Algorithm~\ref{alg:lipschitz} with 
$k = \frac{\Omega(\lambda^2 \log n)}{\eps^2}$ to provide a quasi-polynomial time
algorithm that finds approximate equilibia in penalty games. Notice again that
our algorithm can be applied in games that it is computationally hard to verify 
whether an exact equilibrium exists. Our algorithm either will compute an 
approximate equilibrium or it will fail to find one, thus it will decide that 
the game does not posses an exact equilibrium.

%
\begin{theorem}
\label{thm:qptas2}
In any penalty game $\plip$ with constant number of players and any $\eps > 0$, 
in quasi polynomial time we can either compute a $3\eps$-equilibrium, or decide 
that \plip does not posses an exact equilibrium.
\end{theorem}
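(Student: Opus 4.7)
The plan is to combine Theorem~\ref{thm:qptas1} and Lemma~\ref{lem:cruc} with the algorithmic template of Algorithm~\ref{alg:lipschitz} (using Algorithm~\ref{alg:approx} as a subroutine), exactly mirroring the argument used for $\lambda_p$-Lipschitz games at the end of Section~\ref{sec:lip-games}. Concretely, I will enumerate every $k$-uniform strategy profile $(\profx', \profy')$ with $k = \Theta(\lambda^2 \log n / \eps^2)$, and for each one run Algorithm~\ref{alg:approx} with accuracy parameter $\delta = \eps$ and support size $l = 17\lambda^2\sqrt{p}/\eps^2$ (as given by Lemma~\ref{lem:cruc}) to obtain an $\eps$-approximation $\alpha(\profx', \profy')$ of the true regret. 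If $\alpha(\profx', \profy') \leq 2\eps$ for some candidate, return that profile; otherwise, report that no exact equilibrium exists.

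For correctness, first observe that whenever the algorithm outputs a profile, its computed $\alpha$ is at most $2\eps$ and by construction $\alpha$ overestimates the true regret by at most an additive $\eps$, so the returned profile is indeed a $3\eps$-equilibrium. Conversely, assume \plip possesses an exact equilibrium $(\profx^*, \profy^*)$. Theorem~\ref{thm:qptas1} guarantees that among the enumerated $k$-uniform profiles there is some $(\profx', \profy')$ that is already an $\eps$-equilibrium. Applied to this profile, Lemma~\ref{lem:cruc} says that for each player an $l$-uniform strategy achieves payoff within $\eps$ of an exact best response, so the maximum in step~\ref{step:maxpay} of Algorithm~\ref{alg:approx} underestimates the true best-response payoff by at most $\eps$; combined with the true regret bound of $\eps$, this yields $\alpha(\profx',\profy') \leq 2\eps$ and the algorithm accepts. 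Hence the algorithm fails to output only when no exact equilibrium exists, which justifies the alternative conclusion.

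For the running time, there are ${n+k-1 \choose k}^2 = O(n^{2k})$ candidate $k$-uniform profiles to check, and evaluating each via Algorithm~\ref{alg:approx} takes $O(n^l)$ time per player. Since $k$ and $l$ are both $O(\log n)$ for constant $\eps$, $\lambda$, and $p$, the overall running time is $n^{O(\log n)}$, i.e.\ quasi-polynomial. The only real subtlety in carrying this out is that, unlike in the pure Lipschitz setting, the utility of a penalty game is not globally Lipschitz continuous, so Lemma~\ref{lem:eps-br} does not apply directly; Lemma~\ref{lem:cruc} plays its role here, providing exactly the approximate-best-response guarantee needed to make Algorithm~\ref{alg:approx} correct in this setting. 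With that substitution in place, the argument of Section~\ref{sec:lip-games} goes through verbatim.
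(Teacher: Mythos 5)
Your proposal is correct and follows the same route as the paper: enumerate $k$-uniform profiles with $k = \Theta(\lambda^2 \log n / \eps^2)$ from Theorem~\ref{thm:qptas1}, evaluate each with Algorithm~\ref{alg:approx} using $l = 17\lambda^2\sqrt{p}/\eps^2$ from Lemma~\ref{lem:cruc} in place of Lemma~\ref{lem:eps-br}, and accept the first profile whose computed regret bound is at most $2\eps$. The only slight imprecision is the remark that a penalty-game utility is ``not globally Lipschitz''---it is, but with a constant that can grow with $n$, which is precisely why one must replace Lemma~\ref{lem:eps-br} by Lemma~\ref{lem:cruc} to keep the support size logarithmic; with that understanding your argument and running-time accounting match the paper.
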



\section{Distance Biased Games}

In this section, we focus on three particular classes of distance biased games,
and we provide polynomial-time approximation algorithms for these games.
We focus on the following three penalty functions:
\begin{itemize}
\itemsep2mm
\item \textbf{$L_1$ penalty}:  
$\bbb_r(\profx, \bp) = \|\profx - \bp\|_1 = \sum_i |\profx_i - \bp_i|$.
\item \textbf{$L_2^2$ penalty}: 
$\bbb_r(\profx, \bp) = \|\profx - \bp\|^2_2 = \sum_i (\profx_i - \bp_i)^2$.
\item \textbf{$L_\infty$ penalty}:
$\bbb_r(\profx, \bp) = \|\profx - \bp\|_\infty = \max_i |\profx_i - \bp_i|$.
\end{itemize}

Our approach is to follow the well-known technique of \citet{DMP} that finds a
$0.5$-NE in a bimatrix game. The algorithm that we will use for all three
penalty functions is given below.

\alg{alg:base}
\begin{tcolorbox}[title=Algorithm~\ref{alg:base}. The Base Algorithm]
\begin{enumerate}
\itemsep2mm
\item Compute a best response $\profy^*$ against \bp.
\item Compute a best response \profx against $\profy^*$.
\item Set $\profx^* = \delta\cdot\bp + (1-\delta)\cdot\profx$, 
for some $\delta \in [0,1]$.
\item Return the strategy profile $(\profx^*, \profy^*)$.
\end{enumerate}
\end{tcolorbox}

While this is a well-known technique for bimatrix games, note that it cannot 
immediately be applied to penalty games. This is because the algorithm requires
us to compute two best response strategies, and while computing a best-response
is trivial in bimatrix games, this is not the case for penalty games. Best
responses for $L_1$ and $L_\infty$ penalties can be computed in polynomial-time
via linear programming, and for $L_2^2$ penalties, the ellipsoid algorithm can
be applied. However, these methods do not provide strongly polynomial
algorithms.

In this section, for each of the penalties, we develop a simple combinatorial
algorithm for computing best response strategies for each of these penalties.
Our algorithms are strongly polynomial. Then, we  determine the quality of the
approximation given by the base algorithm when our best response techniques are
used. In what follows we make the common assumption that the payoffs of the 
underlying bimatrix game $(R, C)$ are in $[0, 1]$.


\subsection{A 2/3-approximation algorithm for $L_1$-biased games}

We start by considering $L_1$-biased games. Suppose that we want to compute a
best-response for the row player against a fixed strategy $\profy$ of the column
player. We will show that best response strategies in $L_1$-biased games have a 
very particular form: if $b$ is the best response strategy in the (unbiased) 
bimatrix game $(R, C)$, then the best-response places all of its probability on 
$b$ \emph{except} for a certain set of rows $S$ where it is too costly to shift
probability away from \bp. The rows $i \in S$ will be played with  $\bp_i$ to
avoid taking the penalty for deviating.

The characterisation for whether it is too expensive to shift away from $\bp$ is
given by the following lemma.


\begin{lemma}
\label{lem:l1br}
Let $j$ be a pure strategy, let $k$ be a pure strategy with $\bp_k > 0$, and let
$\profx$ be a strategy with $\profx_k = \bp_k$. The
utility for the row player increases when we shift probability from $k$ to $j$
if and only if $R_j\profy - R_k\profy - 2d_r > 0$. 
\end{lemma}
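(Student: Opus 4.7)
The plan is to do a direct computation of the change in the row player's utility $T_r(\profx, \profy) = \profx^T R \profy - d_r \|\profx - \bp\|_1$ under an infinitesimal shift of mass from row $k$ to row $j$. I would parametrise the shift by a small $\eps > 0$, letting $\profx'$ be the strategy with $\profx'_k = \profx_k - \eps = \bp_k - \eps$ and $\profx'_j = \profx_j + \eps$, and all other coordinates unchanged, and I would compute $T_r(\profx', \profy) - T_r(\profx, \profy)$.

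The bilinear contribution is immediate: $\profx'^T R \profy - \profx^T R \profy = \eps\bigl(R_j \profy - R_k \profy\bigr)$. The penalty contribution requires handling each coordinate's $|\,\cdot\,|$ term separately. At coordinate $k$, since $\profx_k = \bp_k$ we have $|\profx_k - \bp_k| = 0$, and after the shift $|\profx'_k - \bp_k| = \eps$, so this contribution to the $L_1$ norm grows by $\eps$. At coordinate $j$, in the relevant case where $\profx_j \geq \bp_j$ (the operative situation when $j$ is the destination row that receives extra mass above its base), $|\profx'_j - \bp_j| - |\profx_j - \bp_j| = \eps$ as well. Thus $\|\profx' - \bp\|_1 - \|\profx - \bp\|_1 = 2\eps$, and the penalty increases by $2 d_r \eps$.

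Combining the two contributions gives
\begin{equation*}
T_r(\profx', \profy) - T_r(\profx, \profy) = \eps\bigl(R_j \profy - R_k \profy - 2 d_r\bigr),
\end{equation*}
which is strictly positive if and only if $R_j \profy - R_k \profy - 2 d_r > 0$, yielding the claim. This argument is linear in $\eps$, so the same conclusion extends from infinitesimal shifts to any feasible finite shift (any $\eps \leq \bp_k$), since the expression for the change in utility scales linearly with $\eps$.

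The only subtlety (and essentially the only place the proof could go wrong) is the bookkeeping on $|\profx_j - \bp_j|$: if $\profx_j$ were strictly below $\bp_j$ the shift would partially cancel deviation rather than add to it, and the coefficient $2 d_r$ would be replaced by $0$. In the context of the best-response structure being developed in this section, $j$ is the row that receives excess mass above base, so $\profx_j \geq \bp_j$ is the relevant regime and I would state this explicitly when applying the lemma.
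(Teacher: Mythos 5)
Your proof takes the same route as the paper's: a direct computation of the change in $T_r$ under a probability shift of size $\eps$ from $k$ to $j$, yielding the coefficient $R_j\profy - R_k\profy - 2d_r$. The paper's version states the formula $T_r(\profx,\profy) + \delta(R_j\profy - R_k\profy - 2d_r)$ without unpacking the $L_1$ bookkeeping, silently assuming the regime $\profx_j \geq \bp_j$; you correctly flag that the coefficient $2d_r$ relies on this assumption, and that if $\profx_j < \bp_j$ the penalty change would vanish rather than double. This caveat is real (the lemma as literally stated is slightly too strong), but it is harmless in context, since in the best-response algorithm $j$ is the destination row $b$ receiving excess mass, so $\profx_j \geq \bp_j$ always holds where the lemma is applied.
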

\begin{proof}
Suppose that we shift $\delta$ probability from $k$ to $j$, where $\delta \in
(0, \bp_k]$. Then the utility for the row player is equal to $T_r(\profx,\profy)
+ \delta\cdot(R_j\profy - R_k\profy - 2d_r)$, where the final term is the
penalty for shifting away from $k$. Thus, the utility for the row player
increases under this shift if and only if $R_j\profy - R_k\profy - 2d_r > 0$. 
\qed
\end{proof}
%
Observe that, if we are able to shift probability away from a strategy $k$, then
we should obviously shift it to a best response strategy for the (unbiased)
bimatrix game, since this strategy maximizes the increase in our payoff. Hence,
our characterisation of best response strategies is correct. This gives us the
following simple algorithm for computing best responses.
%
\alg{alg:l1}
\begin{tcolorbox}[title=Algorithm~\ref{alg:l1}. Best Response Algorithm for $L_1$ penalty]
\begin{enumerate}
\item Set $S = 0$.
\item Compute a best response $b$ against $\profy$ in the unbiased bimatrix game
$(R, C)$.
\item For each index $i \ne b$ in the range $1 \le i \le n$:
\begin{enumerate}
\item If $R_b\cdot\profy-R_i\cdot\profy-2d_r \leq 0$, then set $\profx_i =
\bp_i$ and $S = S + \bp_i$.
\item Otherwise set $\profx_i = 0$
\end{enumerate}
\item Set $\profx_b = 1 - S$.
\item Return $\profx$.
\end{enumerate}
\end{tcolorbox}

Our characterisation has a number of consequences.
Firstly, it can be seen that if $d_r \geq 1/2$, then there is no profitable 
shift of probability between any two pure strategies, since $0 \leq R_i\profy \leq 1$
for all $i \in [n]$. Thus, we get the following corollary.
\begin{corollary}
If $d_r \geq 1/2$, then \bp is a dominant strategy.
\end{corollary}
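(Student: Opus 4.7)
The plan is to apply Lemma~\ref{lem:l1br} pointwise to rule out any profitable deviation from $\bp$, and then cite the best-response characterisation to conclude $\bp$ is optimal against every column strategy.

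First, fix an arbitrary column strategy $\profy$. I would argue directly that no shift of probability away from $\bp$ can strictly increase the row player's utility. By Lemma~\ref{lem:l1br}, a shift of probability from a pure strategy $k$ (with $\bp_k > 0$) to a pure strategy $j$ raises the row player's payoff if and only if $R_j\profy - R_k\profy - 2d_r > 0$. Since the underlying bimatrix payoffs lie in $[0,1]$, we have $R_j\profy \leq 1$ and $R_k\profy \geq 0$, so $R_j\profy - R_k\profy \leq 1 \leq 2d_r$ whenever $d_r \geq 1/2$. Hence the strict inequality fails for every pair $(j,k)$, and no single-coordinate shift from $\bp$ is profitable.

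The second step is to convert this pairwise statement into global optimality of $\bp$. The cleanest way is to invoke Algorithm~\ref{alg:l1}: given any $\profy$, it computes a pure best response $b$ against $\profy$ in the unbiased game and then, for every row $i \neq b$, places mass $\bp_i$ on $i$ whenever $R_b\profy - R_i\profy - 2d_r \leq 0$, and $0$ otherwise. Under the hypothesis $d_r \geq 1/2$, the inequality $R_b\profy - R_i\profy - 2d_r \leq 0$ holds for every $i$, so the algorithm sets $\profx_i = \bp_i$ for all $i \neq b$ and consequently $\profx_b = 1 - \sum_{i \neq b} \bp_i = \bp_b$. Thus the output best response is exactly $\bp$ itself.

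The only subtlety I foresee is the jump from ``no single shift is profitable'' to ``no mixed deviation is profitable.'' I avoid this by relying on the correctness of Algorithm~\ref{alg:l1} as already established in the section, which delivers a best response of exactly $\bp$ whenever the pairwise condition is globally violated. Since $\profy$ was arbitrary, $\bp$ is a best response against every column strategy, which is precisely the sense of ``dominant'' used here.
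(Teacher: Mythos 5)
Your proposal is correct and takes essentially the same approach as the paper: the paper simply observes, via Lemma~\ref{lem:l1br} together with $0 \leq R_i\profy \leq 1$, that no probability shift away from $\bp$ is profitable when $d_r \geq 1/2$. You make the same observation and additionally trace through Algorithm~\ref{alg:l1} to confirm its output is exactly $\bp$, which is a slightly more explicit way of closing the gap from pairwise shifts to global optimality, but not a different argument.
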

Moreover, since we can compute a best response in polynomial time we get the next
theorem.
\begin{theorem}
\label{thm:l1easy}
In biased games with $L_1$ penalty functions and $\max\{ d_r, d_c\} \geq 1/2$, 
an equilibrium can be computed in polynomial time.
\end{theorem}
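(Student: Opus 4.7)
My plan is to exploit the corollary that immediately precedes the theorem, which establishes that $\bp$ becomes a dominant strategy for the row player whenever $d_r \geq 1/2$, together with the symmetric statement for the column player and $\bq$ when $d_c \geq 1/2$. Since $\max\{d_r,d_c\} \geq 1/2$, at least one of these dominant-strategy conclusions holds, and by symmetry of the roles of the two players it suffices to handle the case $d_r \geq 1/2$.

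Under the assumption $d_r \geq 1/2$, I argue as follows. By the corollary, $\bp$ is a dominant strategy for the row player, so in particular $\bp$ is a best response for the row player against \emph{every} strategy of the column player, including any strategy $\profy^*$ that the column player might play. It therefore suffices to compute a best response $\profy^*$ for the column player against $\bp$; the profile $(\bp,\profy^*)$ will then be an exact equilibrium, because both players are simultaneously playing best responses to each other's strategy. Computing $\profy^*$ in polynomial time is exactly what the column-player analogue of Algorithm~\ref{alg:l1} provides: the same case analysis behind Lemma~\ref{lem:l1br} applies verbatim to the column player against the fixed row strategy $\bp$, yielding a strongly polynomial combinatorial best response procedure.

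If instead $d_r < 1/2 \leq d_c$, the argument is completely symmetric: $\bq$ is dominant for the column player, we compute a best response $\profx^*$ for the row player against $\bq$ using Algorithm~\ref{alg:l1} directly, and return $(\profx^*,\bq)$. In both cases the output is an exact equilibrium rather than merely an approximate one, and the running time is dominated by one invocation of the best response algorithm, hence polynomial.

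There is no real obstacle here: the work has already been done by Lemma~\ref{lem:l1br}, Algorithm~\ref{alg:l1}, and the preceding corollary. The only thing to be careful about is spelling out why \emph{both} players being in equilibrium follows from dominance on one side plus best response on the other, which is precisely the standard observation that a dominant strategy is always a best response regardless of the opponent's play.
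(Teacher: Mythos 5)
Your proof is correct and matches the paper's intended argument: the preceding corollary gives dominance of the base strategy on the side with $d \geq 1/2$, and a single invocation of the $L_1$ best-response algorithm (Algorithm~\ref{alg:l1}) on the other side produces a strategy profile in which both players are best-responding, i.e., an exact equilibrium. The paper leaves this implicit (``since we can compute a best response in polynomial time we get the next theorem''), and you have simply spelled it out, including the symmetric case.
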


Finally, using the characterization of best responses we can see that there is 
a connection between the equilibria of the distance biased game and the well 
supported Nash equilibria (WSNE) of the underlying bimatrix game.
\begin{theorem}
\label{obs:lone}
Let $\biased =\big( R,C, \bbb_r(\profx, \bp), \bbb_c(\profy, \bq), d_r, d_c \big)$  
be a distance biased game with $L_1$ penalties and let $d := \max \{d_r, d_c\}$.
Any equilirbium of \biased is a $2d$-WSNE for the bimatrix game $(R, C)$.
\end{theorem}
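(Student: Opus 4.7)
The plan is to pick an arbitrary equilibrium $(\profx^*,\profy^*)$ of $\biased$ and verify the $2d$-WSNE condition for $(R,C)$ directly: for every $i$ in the support of $\profx^*$, $R_i\profy^* \geq \max_j R_j\profy^* - 2d_r$, and symmetrically for every $j$ in the support of $\profy^*$.

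Let $b$ be a pure best response to $\profy^*$ in the unbiased bimatrix game $(R,C)$, so $R_b\profy^* = \max_j R_j\profy^*$, and fix $i$ with $\profx^*_i > 0$ (the case $i=b$ is trivial). Consider the deviation that shifts a small $\delta \in (0,\profx^*_i]$ of probability from coordinate $i$ to coordinate $b$, producing $\profx$ with $\profx_i = \profx^*_i - \delta$, $\profx_b = \profx^*_b + \delta$, and $\profx_j = \profx^*_j$ otherwise. Since $\profx^*$ is a best response in $\biased$, we must have $T_r(\profx,\profy^*) - T_r(\profx^*,\profy^*) \le 0$. The bilinear component of this change is exactly $\delta(R_b\profy^* - R_i\profy^*)$; and for $\delta$ small enough that no coordinate currently strictly off $\bp$ is pushed across $\bp$, the $L_1$ penalty $\|\profx-\bp\|_1$ changes by $c\delta$ with $c \in \{-2,0,+2\}$: the contribution at coordinate $i$ is $-\delta$ if $\profx^*_i > \bp_i$ and $+\delta$ otherwise, and the contribution at coordinate $b$ is $+\delta$ if $\profx^*_b \ge \bp_b$ and $-\delta$ otherwise. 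Thus the non-improvement inequality becomes $R_b\profy^* - R_i\profy^* \le d_r c$.

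Writing $\Delta := R_b\profy^* - R_i\profy^*$, we have $\Delta \ge 0$ because $b$ is a best response in $(R,C)$, and $d_r \ge 0$ by assumption. A short case split on the three possible values of $c$ then forces $\Delta \le 2d_r$ in every case: for $c = 2$ the bound is immediate; for $c = 0$ we get $\Delta \le 0$, hence $\Delta = 0$; and for $c = -2$, combined with $\Delta \ge 0$ and $d_r \ge 0$, we again get $\Delta = 0$. The identical argument applied to the column player, using a probability shift from any $j$ in the support of $\profy^*$ to a pure best response against $\profx^*$ in $(R,C)$, yields the analogous bound with $d_c$ in place of $d_r$, and therefore $(\profx^*,\profy^*)$ is a $2d$-WSNE of $(R,C)$.

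The only delicate point is getting the first-order expression for the $L_1$ penalty change right: once we take $\delta$ small enough that no $\profx^*_k$ with $\profx^*_k \ne \bp_k$ is moved across $\bp_k$, the four sign patterns of $(\profx^*_i - \bp_i,\ \profx^*_b - \bp_b)$ collapse to the three possible values of $c$, and the rest is routine. No separate treatment of the boundary cases $\profx^*_i = \bp_i$ or $\profx^*_b = \bp_b$ is required, since the right derivative of $|\cdot|$ at $0$ is $+1$, which is already the worst case in the enumeration.
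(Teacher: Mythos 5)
Your proof is correct, and it reaches the same bound via a first-order shifting argument, but it is more self-contained and slightly more careful than the paper's. The paper proves the theorem by appealing directly to the characterization of best responses given by Lemma~\ref{lem:l1br} and Algorithm~\ref{alg:l1}: it asserts that $\profx^*_i > 0$ implies $R_b\profy^* - R_i\profy^* - 2d_r \leq 0$ because any supported action survives the algorithm's pruning step. That characterization, however, is proved in the paper only for the special case where probability is shifted away from a coordinate sitting exactly at $\bp_k$ (the hypothesis $\profx_k = \bp_k$ in Lemma~\ref{lem:l1br}), and the paper's if-and-only-if statement in the theorem's proof is really about the algorithm's canonical output rather than an arbitrary best response. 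You instead work directly from the equilibrium non-improvement inequality, compute the first-order change of the bilinear term and of the $L_1$ penalty for a probability shift from an arbitrary supported $i$ to $b$, and enumerate all sign patterns so that the penalty slope $c$ ranges over $\{-2,0,2\}$. This handles all positions of $\profx^*_i$ and $\profx^*_b$ relative to $\bp$, not just the case $\profx^*_i = \bp_i$, and it does not rely on the claim that every equilibrium strategy has the same structural form as the algorithm's output. The tradeoff is that the paper's version is shorter because it reuses already-developed machinery, while yours is longer but stands on its own and closes a small gap in rigor. Both boil down to the same one-line inequality $R_b\profy^* - R_i\profy^* \leq d_r\cdot c \leq 2d_r$.
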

\begin{proof}
Let $(\profx^*, \profy^*)$ be an equilibrium for \biased. From the best response
Algorithm for $L_1$ penalty games we can see that $\profx^*_i > 0$ if and only 
if $R_b\cdot\profy^*-R_i\cdot\profy^*-2d_r \leq 0$, where $b$ is a pure best
response against $\profy^*$. This means that for every $i \in [n]$ with 
$\profx^*_i > 0$, it holds that 
$R_i\cdot\profy^* \geq \max_{j \in [n]}R_j\cdot\profy^*-2d$. Similarly, it 
holds that $C^T_i\cdot\profx^* \geq \max_{j \in [n]}C^T_j\cdot\profx^*-2d$ for 
all $i \in [n]$ with $\profy^*_i > 0$. This is the definition of a $2d$-WSNE for 
the bimatrix game $(R, C)$.
\qed
\end{proof}

\subsubsection{Approximation algorithm}

We now analyse the approximation guarantee provided by the base
algorithm for $L_1$-biased games. So, let $(\profx^*, \profy^*)$ be the strategy
profile the is returned by the base algorithm. Since we have already shown that
exact Nash equilibria can be found in games with either $d_c \ge 1/2$ or $d_r
\ge 1/2$, we will assume that both $d_c$ and $d_r$ are less than $1/2$, since
this is the only interesting case. 

We start by considering the regret of the row player. The following lemma will
be used in the analysis of all three of our approximation algorithms. 

\begin{lemma}
\label{lem:row_regret}
Under the strategy profile $(\profx^*, \profy^*)$ the regret for the row player
is at most $\delta$.
\end{lemma}
\begin{proof}
Notice that for all $i \in [n]$ we have
\begin{equation*}
|\delta\bp_i + (1-\delta)\profx_i  - \bp_i| = (1-\delta)|\profx_i  - \bp_i|, 
\end{equation*}
hence $\| \profx^* - \bp\|_1 = (1-\delta)\|\profx - \bp\|_1$ and 
$\| \profx^* - \bp\|_\infty = (1-\delta)\|\profx - \bp\|_\infty$. Furthermore,
notice that $\sum_i\big((1-\delta)\profx_i + \delta\bp_i -\bp_i\big)^2 = 
(1-\delta)^2\|\profx - \bp\|^2_2$, thus 
$\| \profx^* - \bp\|_2^2 \leq (1-\delta)\|\profx - \bp\|_2^2$.
Hence the payoff for the row player it holds
$T_r(\profx^*, \profy^*) \geq
\delta \cdot T_r(\bp, \profy^*) + (1-\delta) \cdot T_r(\profx, \profy^*)$
and his regret under the strategy profile $(\profx^*, \profy^*)$ is
\begin{align*}
\regret^r(\profx^*, \profy^*) & = \max_{\tilde{\profx}}T_r(\tilde{\profx}, \profy^*) - T_r(\profx^*, \profy^*)\\
& =T_r(\profx, \profy^*) - T_r(\profx^*, \profy^*) 
\qquad \text{(since \profx is a best response against $\profy^*$)}\\
& \leq \delta \big( T_r(\profx, \profy^*) - T_r(\bp, \profy^*) \big)\\
& \leq \delta \qquad \qquad \text{(since $\max_\profx T_r(\profx, \profy^*) \leq 1$ 
and  $T_r(\bp,\profy^*) \geq 0$)}.
\end{align*}
\qed
\end{proof}

Next, we consider the regret of the column player. The following lemma will be
used for both the $L_1$ case and the $L_\infty$ case. Observe that in the $L_1$
case, the precondition of $d_c\cdot \bbb_c(\profy^*,\bq) \leq 1$ always holds,
since we have $\|\profy^* - \bq\|_1 \leq 2$, thus $d_c\cdot \bbb_c(\profy^*,\bq)
\leq 1$ since we are only interested in the case where $d_c \leq 1/2$.

\begin{lemma}
\label{lem:col_l1inf}
If $d_c\cdot \bbb_c(\profy^*,\bq) \leq 1$, then under strategy profile
$(\profx^*, \profy^*)$ the column player suffers at most $2-2\delta$ regret.
\end{lemma}
\begin{proof}
The regret of the column player under the strategy profile 
$(\profx^*, \profy^*)$ is
\begin{align*}
\regret^c(\profx^*, \profy^*) & = \max_{\profy}T_c(\profx^*, \profy) - T_c(\profx^*, \profy^*)\\
& = \max_{\profy}\Big\{ (1-\delta) T_c(\profx, \profy) + \delta T_c(\bp, \profy) 
\big)\Big\} - (1-\delta) T_c(\profx, \profy^*) -\delta T_c(\bp, \profy^*)\\
& \leq (1-\delta) \big( \max_{\profy}T_c(\profx^*, \profy) - T_c(\profx, \profy^*)\big)
\text{(since $\profy^*$ is a best response against \bp)}\\
& \leq (1-\delta)(1 + d_c\cdot \bbb_c(\profy^*,\bq))\quad \text{(since $\max_\profx T_c(\profx^*, \profy) \leq 1$)}\\
& \leq (1-\delta)\cdot 2 \quad \text{(since $d_c\cdot \bbb_c(\profy^*,\bq) \leq 1$)}.
\end{align*}
\qed
\end{proof}

To complete the analysis, we must select a value for $\delta$ that equalises the
two regrets. It can easily be verified that setting $\delta = 2/3$ ensures that
$\delta = 2 - 2\delta$, and so we have the following theorem.
%
\begin{theorem}
\label{thm:l1apx}
In biased games with $L_1$ penalties a 2/3-equilibrium can be computed in
polynomial time.
\end{theorem}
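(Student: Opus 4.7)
The plan is to assemble the pieces already built in the section and to choose $\delta$ so that the two regret bounds coincide. First I would split into two cases according to the size of $d_r$ and $d_c$. If $\max\{d_r,d_c\} \geq 1/2$, then Theorem~\ref{thm:l1easy} already yields an exact equilibrium in polynomial time, which is trivially a $2/3$-equilibrium, so nothing more is needed in that case.

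For the remaining (interesting) case where both $d_r < 1/2$ and $d_c < 1/2$, I would run Algorithm~\ref{alg:base} with the parameter $\delta = 2/3$, using Algorithm~\ref{alg:l1} to perform the two best-response computations in strongly polynomial time. The output is a profile $(\profx^*, \profy^*)$ with $\profx^* = \delta\bp + (1-\delta)\profx$. The row player's regret is handled directly by Lemma~\ref{lem:row_regret}, giving $\regret^r(\profx^*, \profy^*) \leq \delta = 2/3$.

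For the column player I would invoke Lemma~\ref{lem:col_l1inf}, but this requires first verifying its hypothesis $d_c \cdot \bbb_c(\profy^*,\bq) \leq 1$. This is where I would spend a sentence: since $\profy^*$ and $\bq$ are both probability distributions their $L_1$ distance is at most $2$, and combined with $d_c < 1/2$ this gives $d_c \cdot \|\profy^* - \bq\|_1 < 1$ as required. Lemma~\ref{lem:col_l1inf} then yields $\regret^c(\profx^*, \profy^*) \leq 2(1-\delta) = 2/3$.

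Putting the two regret bounds together, both players have regret at most $2/3$, so $(\profx^*, \profy^*)$ is a $2/3$-equilibrium, and every step (best response computation, convex combination, evaluation) runs in strongly polynomial time. I do not foresee a substantive obstacle here; the only nontrivial choice is the value $\delta = 2/3$, which is pinned down uniquely by the equation $\delta = 2 - 2\delta$ that equalizes the bounds from Lemmas~\ref{lem:row_regret} and~\ref{lem:col_l1inf}.
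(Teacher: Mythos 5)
Your proposal is correct and follows essentially the same route as the paper: you split off the trivial case via Theorem~\ref{thm:l1easy}, apply Lemma~\ref{lem:row_regret} for the row regret, verify the $L_1$ hypothesis $d_c\|\profy^*-\bq\|_1 \le 1$ using $\|\profy^*-\bq\|_1 \le 2$ and $d_c < 1/2$ before invoking Lemma~\ref{lem:col_l1inf}, and equalize the two bounds at $\delta = 2/3$. The paper does exactly this in the surrounding discussion, so there is nothing to add.
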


\subsection{A 5/7-approximation algorithm for $L_2^2$-biased games}

We now turn our attention to biased games with an $L_2^2$ penalty. Again, we
start by giving a combinatorial algorithm for finding a best response.
Throughout this section, we fix $\profy$ as a column player strategy, and we
will show how to compute a best response for the row player.

Best responses in $L_2^2$-biased games can be found by solving a quadratic
program, and actually this particular quadratic program can be solved via the
ellipsoid algorithm~\cite{Kozlov80}. We
will give a simple combinatorial algorithm that uses the Karush-Kuhn-Tucker 
(KKT) conditions, and produces a closed formula for the solution. Hence, we will
obtain a strongly polynomial time algorithm for finding best responses.


Our algorithm can be applied on $L_2^2$ penalty functions and any value $d_r$,
but for notation simplicity we describe our method for $d_r = 1$. Furthermore,
we define $\alpha_i := R_i\profy + 2\bp_i$ and we call $\alpha_i$ as the payoff 
of pure strategy $i$. Then, the utility for the row player can be written as 
$T_r(\profx, \profy) = \sum_{i = 1}^n\profx_i\cdot\alpha_i - \sum_{i = 1}^n\profx^2_i - \bp^T\bp$.
Notice that the term $\bp^T\bp$ is a constant and it does not affect the solution
of the best response; so we can exclude it from our computations.
Thus, a best response for the row player against strategy \profy is the solution 
of the following quadratic program
\begin{align*}
\text{maximize} \qquad & \sum_{i = 1}^n\profx_i\cdot\alpha_i - \sum_{i = 1}^n\profx^2_i\\
\text{subject to} \quad & \sum_{i=1}^n\profx_i = 1\\
&  \profx_i \geq 0 \quad \text{ for all } i \in [n].
\end{align*}
The Lagrangian function for this problem is 
$$\mathcal{L}(\profx, \profy, \lambda, \profu) = \sum_{i = 1}^n\profx_i\cdot\alpha_i 
- \sum_{i = 1}^n\profx^2_i  - \lambda(\sum_{i = 1}^n\profx_i - 1) - 
\sum_{i=1}^n u_i\profx_i$$
and the corresponding KKT conditions
\begin{align}
\label{eq:c1}
\alpha_i - \lambda -2\profx_i -\profu_i = 0 & \quad \text{for all } i \in [n]\\
\label{eq:c2}
\sum_{i = 1}^n\profx_i = 1\\
\label{eq:c3}
\profx_i \geq 0 & \quad \text{for all } i \in [n]\\
\label{eq:comple}
\profx_i\cdot\profu_i = 0 & \quad \text{for all } i \in [n].
\end{align}
Constraints \eqref{eq:c1}-\eqref{eq:c3} are the stationarity conditions and 
\eqref{eq:comple} are the complementarity slackness conditions. We say that 
strategy \profx is a \emph{feasible response} if it satisfies the KKT conditions. 
The obvious way to compute a best response is by exhaustively checking all $2^n$ 
possible combinations for the complementarity conditions and choose the
feasible response that maximizes the utility for a player.
Next we prove how we can bypass the brute force technique and compute all best
responses in polynomial time. 

In what follows, without loss of generality, we assume that 
$\alpha_1 \geq \ldots \geq \alpha_n$. That is, the pure strategies are ordered
according to their payoffs. In the next lemma we prove that in every 
best response, if a player plays pure strategy $l$ with positive probability,
then he must play every pure strategy $k$ with $k < l$ with positive
probability.

\begin{lemma}
\label{lem:order}
In every best response $\profx^*$ if $\profx^*_l > 0$ then $\profx^*_k>0$ for all 
$k<l$.
\end{lemma}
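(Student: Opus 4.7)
The plan is to prove this by contradiction via a local exchange argument. Suppose toward contradiction that some best response $\profx^*$ against $\profy$ has $\profx^*_l > 0$ but $\profx^*_k = 0$ for some $k < l$. By the paper's ordering convention on the $\alpha_i$, this means probability mass sits on a (weakly) worse pure strategy while the (weakly) better pure strategy $k$ is unused; I want to show this cannot happen at an optimum.

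The key step is to construct an explicit improvement. I would take the candidate $\hat{\profx}$ obtained from $\profx^*$ by transferring a small amount $\eps \in (0, \profx^*_l)$ of probability from pure strategy $l$ to pure strategy $k$, leaving all other coordinates unchanged; this is clearly a feasible mixed strategy. A short computation of $T_r(\hat{\profx}, \profy) - T_r(\profx^*, \profy)$ shows the linear part contributes $(\alpha_k - \alpha_l)\eps \geq 0$ while the quadratic penalty $-\sum_i \profx_i^2$ contributes $2\profx^*_l \eps - 2\eps^2$. Since $\profx^*_l > 0$, the sum is strictly positive for all sufficiently small $\eps > 0$, contradicting optimality of $\profx^*$.

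The main conceptual subtlety — more than a technical hurdle — is recognising that the KKT conditions in isolation do not rule out the bad configuration. If $\profx^*_l > 0$ and $\profx^*_k = 0$, then complementary slackness forces $\lambda = \alpha_l - 2\profx^*_l$ and $\profu_k = (\alpha_k - \alpha_l) + 2\profx^*_l$, which is automatically nonnegative whenever $\alpha_k \geq \alpha_l$. So the ordering of the supports is not visible from KKT feasibility alone; it reflects the strict concavity of the objective. The exchange argument exploits exactly this strict concavity: the $-\sum_i \profx_i^2$ term strictly rewards moving mass from a coordinate carrying positive probability to one carrying zero probability, and the inequality $\alpha_k \geq \alpha_l$ guarantees the linear payoffs do not oppose the shift. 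This observation will also be useful in the subsequent algorithmic step, since it means the support of any best response must be a prefix $\{1, \ldots, j\}$ in the ordered indexing, reducing the combinatorial search from $2^n$ supports to $n$.
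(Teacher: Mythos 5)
Your proof is correct and takes essentially the same approach as the paper: both arguments shift a small amount of probability from pure strategy $l$ to strategy $k$ and observe that the strict concavity of $-\sum_i \profx_i^2$ forces a strict increase in the objective for small positive shifts, contradicting optimality. The paper computes the maximizing shift $\delta^* = (\alpha_k - \alpha_l + 2\profx^*_l)/4 > 0$ explicitly, whereas you take the first-order view for small $\eps$; these are equivalent.

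One correction to your side remark about KKT. You write that $\profu_k = (\alpha_k - \alpha_l) + 2\profx^*_l \geq 0$ shows the bad configuration is KKT-feasible. That only appears to be the case because the paper's list of conditions \eqref{eq:c1}--\eqref{eq:comple} omits the dual feasibility sign constraint. With the paper's Lagrangian $\mathcal{L} = f - \lambda(\sum_i \profx_i - 1) - \sum_i u_i \profx_i$, the correct dual feasibility requirement for the constraint $\profx_i \geq 0$ is $\profu_i \leq 0$ (equivalently, the reduced cost $\alpha_i - \lambda$ of a non-support index must be nonpositive). Under that condition, $\profu_k = (\alpha_k - \alpha_l) + 2\profx^*_l > 0$ is a violation, so the full KKT system does rule out the bad configuration -- as it must, since KKT is necessary and sufficient for this strictly concave program. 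Your exchange argument is self-contained and does not rely on this, so the main proof is unaffected.
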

\begin{proof}
For the sake of contradiction suppose that there is a best response $\profx^*$ 
and a $k<l$ such that $\profx^*_l>0$ and $\profx^*_k=0$. Let us denote 
$M = \sum_{i \neq \{l,k\}} \alpha_i \cdot \profx^*_i - \sum_{i \neq \{l,k\}} \profx^{*^2}_i $.
Suppose now that we shift some probability, denoted by $\delta$, from pure 
strategy $l$ to pure strategy $k$. Then his utility is 
$T_r(\profx^*, \profy) = M + \alpha_l \cdot(\profx^*_l-\delta) - (\profx^*_l-\delta)^2
+ \alpha_k \cdot \delta - \delta^2$, which is maximized for 
$\delta = \frac{\alpha_k-\alpha_l + 2\profx^*_l}{4}$. Notice that $\delta > 0$
since $\alpha_k \geq \alpha_l$ and $\profx^*_l > 0$, thus the row player can 
increase his utility by assigning positive probability to pure strategy $k$ which 
contradicts the fact that $\profx^*$ is a best response.
\qed
\end{proof}

Lemma~\ref{lem:order} implies that there are only $n$ possible supports that a
best response can use.
%
%
%
Indeed, we can exploit the KKT conditions to derive, for each candidate support,
the exact probability that each pure strategy would be played. We derive the
probability as a function of $\alpha_i$s and of the support size. Suppose that
the KKT conditions produce a feasible response when 
we set the support to have size $k$. From condition~\eqref{eq:c1} we get that 
$\profx_i = \frac{1}{2}(\alpha_i - \lambda)$ for all $1 \leq i \leq k$ and zero
else. But we know that $\sum_{j}^k \profx_j =1$. Thus we get that 
$\sum_{j = 1}^k \frac{1}{2}(\alpha_j - \lambda) = 1$ and if we solve for 
$\lambda$ get that $\lambda = \frac{\sum_{j = 1}^k \alpha_j - 2}{k}$. This means 
that for all $i \in [k]$ we get
\begin{align}
\label{eq:xi}
\profx_i = \frac{1}{2}\left(\alpha_i - \frac{\sum_{j = 1}^k \alpha_j - 2}{k}\right).
\end{align}

So, our algorithm does the following. It loops through all $n$ candidate
supports for a best response. For each one, it uses Equation~\eqref{eq:xi} to
determine the probabilities, and then checks whether these satisfy the KKT
conditions, and thus if this is a feasible response. If it is, then it is saved
for in a list of feasible responses, otherwise it is discarded. After all $n$
possibilities have been checked, the feasible response with the highest payoff
is then returned.
%
 
\alg{alg:l2}
\begin{tcolorbox}[title=Algorithm~\ref{alg:l2}. Best Response Algorithm for $L_2^2$ penalty]
\begin{enumerate}
\itemsep2mm
\item For $i = 1 \ldots n$
\begin{enumerate}
\item Set $\profx_1 \geq \ldots \geq \profx_i > 0$ and $\profx_{i+1} = \ldots = \profx_n = 0$.
\label{step2}
\item Check if there is a feasible response under these constraints.
\item If so, add it to the list of feasible responses.
\end{enumerate}
\item Among the feasible responses choose one with the highest utility.
\end{enumerate}
\end{tcolorbox}

\subsubsection{Approximation Algorithm}
We now show that the base algorithm gives a 5/7-approximation when applied to
$L_2^2$-penalty games. For the row player's regret, we can use
Lemma~\ref{lem:row_regret} to show that the regret is bounded by $\delta$.
However, for the column player's regret, things are more involved. We will show
that the regret of the column player is at most $2.5-2.5\delta$. That
analysis depends on the maximum entry of the base strategy $\bq$ and more
specifically on whether $\max_k \{\bq_k\} \leq 1/2$ or not.

\begin{lemma}
\label{lem:qleqhalf}
If $\max_k \{\bq_k\} \leq 1/2$, then the regret the column player suffers under
strategy profile $(\profx^*, \profy^*)$ is at most $2.5-2.5\delta$.
\end{lemma}
\begin{proof}
Note that when $\max_k \{\bq_k\} \leq 1/2$, then $\bbb_c=\|\profy-\bp\|_2^2 \leq
1.5$ for all possible \profy. Then, using the analysis from
Lemma~\ref{lem:col_l1inf}, along with the fact that $d_c\cdot
\bbb_c(\profy^*,\bq) \leq 2$ for $L_2^2$ penalties, and since by assumption $d_c
= 1$, the claim follows.
\qed
\end{proof}

For the case where there is a $k$ such that $\bq_k > 1/2$ a more involved 
analysis is needed. The first goal is to prove that under \emph{any} strategy 
$\profy^*$ that is a best response against $\bp$ the pure strategy $k$ is played
with positive probability. In order to prove that, first it is proven that there
is a feasible response against strategy \bp where pure strategy $k$ is played 
with positive probability. In what follows we denote $\alpha_i := C^T_i\bp + 2\bq_i$.

\begin{lemma} 
\label{lem:kktfeas}
Let $\bq_k>1/2$ for some $k \in [n]$. Then there is a feasible response where 
pure strategy $k$ is played with positive probability.
\end{lemma}
\begin{proof}
 Note that $\alpha_k > 1$ since by assumption 
$\bq_k > 1/2$. Recall from Equation~\eqref{eq:xi} that in a feasible response 
\profy it holds that $\profy_i = 
\frac{1}{2}\left(\alpha_i - \frac{\sum_{j = 1}^k \alpha_j - 2}{k}\right)$.

In order to prove the claim it is sufficient to show that $\profy_k>0$ when in 
the KKT conditions is set $\profy_i>0$ for all $i \in [k]$. Or equivalently, to 
show that $\alpha_k - \frac{\sum_{j = 1}^k \alpha_j - 2}{k}
= \frac{1}{k}\big((k-1)\alpha_k + 2 - \sum_{j = 1}^{k-1} \alpha_j\big)>0$. But,
\begin{align*}
(k-1)\alpha_k + 2 - \sum_{j = 1}^{k-1} \alpha_j & > k+1 - \sum_{j = 1}^{k-1} \big( C^T\profx + 2\bq_i \big) 
\quad \text{(since $\alpha_k > 1$)}\\
& \geq  k+1 - (k - 1) - \sum_{j = 1}^{k-1} 2\bq_i\\
& \geq 1 + \bq_k \quad \text{(since $\bq \in \Delta^n$)}\\
& > 0.
\end{align*}
The claim follows.
\qed
\end{proof}

Next it is proven that the utility of the column player is increasing when he 
adds pure strategies $i$ in his support such that $\alpha_i > 1$.
\begin{lemma}
\label{lem:monfeas}
Let $\profy^k$ and $\profy^{k+1}$ be two feasible responses with support size 
$k$ and $k+1$ respectively, where $\alpha_{k+1}>1$. 
Then $T_c(\profx, \profy^{k+1}) > T_c(\profx, \profy^k)$.
\end{lemma}
\begin{proof}
Let $\profy^k$ be a feasible response with support size $k$ for the column 
player against strategy \bp and let
$\lambda(k):=\frac{\sum_{j = 1}^k \alpha_j - 2}{2k}$. Then the utility of the 
column player when he plays $\profy^k$ can be written as
\begin{align*}
T_c(\profx, \profy^k) & = \sum_{i=1}^n \profy^k_i\cdot \alpha_i - \sum_{i=1}^n (\profx^k_i)^2 - \bq^T\bq\\
 & = \sum_{i=1}^k \profy^k_i \big(\alpha_i - \profy^k_i\big) - \bq^T\bq\\
 & = \sum_{i=1}^k \left(\frac{\alpha_i}{2} - \lambda(k)\right) \left(\frac{\alpha_i}{2} + \lambda(k)\right) - \bq^T\bq\\
 & = \frac{1}{4}\sum_{i =1}^k\alpha_i^2 - k\cdot\big(\lambda(k)\big)^2 - \bq^T\bq.
\end{align*}
The goal now is to prove that $T_c(\profx, \profy^{k+1}) - T_c(\profx, \profy^k) > 0$.
By the previous analysis for $T_c(\profx, \profy^k)$ and if 
$A := \sum_{i =1}^k\alpha_i - 2$, then
\begin{align*}
T_c(\profx, \profy^{k+1}) - T_c(\profx, \profy^k)
 & = \frac{1}{4}\sum_{i =1}^{k+1}\alpha_i^2 - (k+1)\big(\lambda(k+1)\big)^2
-\frac{1}{4}\sum_{i =1}^k\alpha_i^2 + k\cdot\big(\lambda(k)\big)^2\\
& = \frac{1}{4}\left( \alpha_{k+1}^2 + \frac{A^2}{k} - \frac{(A+\alpha_{k+1})^2}{k+1}\right)\\
& = \frac{1}{4}\left( \alpha_{k+1}^2 + \frac{1}{k+1}(A^2 - \alpha_{k+1}^2 - 2A\alpha_{k+1} )\right)\\
& = \frac{1}{4(k+1)} \big(k\alpha_{k+1}^2 + A^2 - 2A\alpha_{k+1} \big)\\
& > \frac{1}{4(k+1)} \big(k + A^2 - 2A \big) \quad \text{(since $1 <
\alpha_{k+1} \leq 2$ and $A > k-2$)} \\
& > \frac{1}{4(k+1)} \big(k^2 - 5k + 8 \big) \quad \text{(since $A > k-2$)} \\
& > 0.
\end{align*}
\qed
\end{proof}
Notice that $\alpha_k \geq 2\bp_k > 1$. Thus, the utility of the feasible 
response that assigns positive probability to pure strategy $k$ is strictly 
greater than the utility of any feasible responses that does not assign 
probability to $k$. Thus strategy $k$ is always played in a best response. 
Hence, the next lemma follows.
\begin{lemma}
\label{lem:pos-prob}
If there is a $k \in [n]$ such that $\bq_k > 1/2$, then in \emph{every} best 
response $\profy^*$ the pure strategy $k$ is played with positive probability.
\end{lemma}
Using now Lemma~\ref{lem:pos-prob} we can provide a better bound for the regret 
the column player suffers, since in every best response $\profy^*$ the 
pure strategy $k$ is played with positive probability.
\begin{lemma}
\label{lem:colbound2}
Let $\profy^*$ be a best response when there is a pure strategy $k$ with 
$\bq_k > 1/2$. Then the regret for the column player under strategy profile 
$(\profx^* ,\profy^*)$ is bounded by $2 - 2\delta$.
\end{lemma}
\begin{proof}
Before we proceed with our analysis we assume without loss of generality that 
$k=1$. Recall from the analysis for the Algorithm 1 that the regret for the 
column player is  
\begin{align}
\nonumber
\regret^c(\profx^*, \profy^*) & \leq (1-\delta) \Big(\max_{\tilde{\profy} \in \Delta} \{\hat{\profx}^TC\tilde{\profy} \} + 2\tilde{\profy}^T\bq_k - 2\profy^{*^T}\bq + \profy^{*^T}\profy^* \Big)\\
\label{eq:help1}
& \leq (1-\delta) \big(1+ 2\bq_k - 2\profy^{*^T}\bq + \profy^{*^T}\profy^*\big).
\end{align}
We focus now on the term $\profy^{*^T}\profy^* - 2\profy^{*^T}\bq$. 
It can be proven~\footnote{Appendix~\ref{app:proof}} that 
$\profy^{*^T}\profy^* - 2\profy^{*^T}\bq \leq 1 - 2\bq_k$. 
Thus, from~\eqref{eq:help1} we get that 
$\regret^c(\profx^*, \profy^*) \leq 2-2\delta$.
\qed
\end{proof}

Recall now that the regret for the row player is bounded by $\delta$, so if we 
optimize with respect to $\delta$ the regrets are equal for $\delta = 2/3$.
Thus, the next theorem follows, since when the there is a $k$ with $\bq_k > 1/2$ 
the Algorithm 1 produces a $2/3$-equilibrium. Hence, combining this with 
Lemma~\ref{lem:qleqhalf} the Theorem~\ref{thm:L2} follows for $\delta = 5/7$.

\begin{theorem}
\label{thm:L2}
In biased games with $L_2^2$ penalties a $5/7$-equilibrium can be computed 
in polynomial time.
\end{theorem}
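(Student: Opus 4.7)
The plan is to feed the $L_2^2$-penalty game into the base algorithm (Algorithm~\ref{alg:base}), using Algorithm~\ref{alg:l2} as the best-response subroutine in both of its best-response steps. Since Algorithm~\ref{alg:l2} only loops through $n$ candidate support sizes and, for each one, applies the closed-form \eqref{eq:xi} plus a KKT-feasibility check, both best-response computations are strongly polynomial, and hence so is the whole procedure. Next, I would bound the row player's regret: Lemma~\ref{lem:row_regret} applies verbatim to the $L_2^2$ setting (its proof already handles the $L_2^2$ case via the inequality $\|\profx^* - \bp\|_2^2 \le (1-\delta)\|\profx - \bp\|_2^2$), yielding $\regret^r(\profx^*,\profy^*) \le \delta$ independently of $\bq$.

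The only nontrivial ingredient is the column player's regret, which I would handle by a case split on the base strategy $\bq$. If $\max_k \bq_k \le 1/2$, then Lemma~\ref{lem:qleqhalf} directly gives $\regret^c \le 2.5 - 2.5\delta$. If on the other hand some coordinate $k$ has $\bq_k > 1/2$, then Lemma~\ref{lem:colbound2} gives the sharper bound $\regret^c \le 2 - 2\delta$; here the key structural fact, already established in Lemmas~\ref{lem:kktfeas}, \ref{lem:monfeas} and \ref{lem:pos-prob}, is that every best response $\profy^*$ against $\bp$ assigns positive mass to the heavy coordinate $k$, which is precisely what allows the $\profy^{*T}\profy^* - 2\profy^{*T}\bq \le 1-2\bq_k$ step in the proof of Lemma~\ref{lem:colbound2}.

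Finally, I would set $\delta = 5/7$ uniformly. In the first case this balances the two regret bounds at $\delta = 2.5 - 2.5\delta = 5/7$. In the second case the row regret is $5/7$ while the column regret drops to $2 - 10/7 = 4/7 \le 5/7$, so the output is still a $5/7$-equilibrium. Combining both cases yields the theorem. The main obstacle is really not in this last step, which is a trivial optimization of $\delta$, but in the structural analysis supporting Lemma~\ref{lem:colbound2} in Case~2; once that is in hand, the proof of Theorem~\ref{thm:L2} is just a clean case analysis plus the uniform choice $\delta = 5/7$.
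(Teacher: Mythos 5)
Your proposal is correct and follows essentially the same route as the paper: base algorithm with Algorithm~\ref{alg:l2} as the best-response subroutine, Lemma~\ref{lem:row_regret} for the row regret, and the case split on $\max_k \bq_k$ via Lemmas~\ref{lem:qleqhalf} and \ref{lem:colbound2} for the column regret. The only cosmetic difference is that you fix $\delta = 5/7$ uniformly and observe that the $\bq_k > 1/2$ case then gives a regret of $4/7 \le 5/7$, whereas the paper balances $\delta = 2/3$ in that case for a sharper local guarantee; both reach the claimed $5/7$ bound, and your presentation is arguably cleaner.
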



\subsection{Inner product penalty games}

We observe that we can also tackle the case where the penalty function is the 
inner product of the strategy played, i.e. $\bp = \bq = \textbf{0}$. For these 
games, that we call inner product penalty games, we replace $\bp$ as the starting
point of the base algorithm with the fully mixed strategy $\profx^n$. Hence, for
that case $\profx^* = \delta\cdot\profx^n + (1-\delta)\cdot\profx$ for some
$\delta \in [0,1]$. In Appendix~\ref{sec:app-inner-appr} we prove the next 
theorem.
Again, the regret the row player suffers under strategy
profile $(\profx^*, \profy^*)$ is bounded by $\delta$.
\begin{lemma}
\label{lem:row-reg}
When the penalty function is the inner product of the strategy played, then the 
regret for the row player under strategy profile $(\profx^*, \profy^*)$ is 
bounded by $\delta$.
\end{lemma}
Furthermore, using similar analysis as in Lemma~\ref{lem:col_l1inf} it can be 
proven that the regret for the column player under strategy profile $(\profx^*, 
\profy^*)$ is bounded by $(1-\delta)(1+d_c\cdot\profy^{*^T}\profy^*)$. For the 
column player we will distinguish between the cases where $d_c \leq 1/2$ and 
$d_c > 1/2$. For the first case where $d_c \leq 1/2$ it is easy see that the 
algorithm produces a 0.6-equilibrium. For the other case, when $d_c > 1/2$,
first it is proven that there is no pure best response.
\begin{lemma}
\label{lem:pure}
If the penalty for the column player is equal to $\profy^T\profy$ and  
$d_c > \frac{1}{2}$, then there is no pure best response against any strategy 
of the row player.
\end{lemma}
\begin{proof}
Let $C_j$ to denote the payoff of the column player from his $j$-th pure strategy
against some strategy \profx played by the row player.
For the sake of contradiction, assume that there is a pure best response for the
column player where, without loss of generality, he plays only his first pure 
strategy. Suppose now that he shifts some probability to his second strategy, 
that is he plays the first pure  strategy with probability $x$ and the second 
pure strategy with probability $1-x$. The utility for the column player under 
this mixed strategy is $x\cdot C_1 + (1-x) \cdot C_2 -d_c\cdot( x^2 + (1-x)^2)$, 
which is maximized for $x = \frac{2d_c+C_1-C_2}{4d_c}$. Notice that $x > 0$, 
which means that the column player can deviate from the pure strategy and 
increase his utility. The claim follows.
\qed
\end{proof}

With Lemma~\ref{lem:pure} in hand, it can be proven that when $d_c > 1/2$ the 
column player does not play any pure strategy with probability greater than 3/4.
\begin{lemma}
\label{lem:ymax}
If $d_c>1/2$, then in $\profy^*$ no pure strategy is played with probability 
greater than 3/4.
\end{lemma}
\begin{proof}
For the sake of contradiction suppose that there is a pure strategy 
$i$ in $\profy^*$ that is played with probability greater than 3/4. Furthermore,
let $k$ be the support size of $\profy^*$. From Lemma~\ref{lem:pure}, since 
$d_c>1/2$, we know that there is no pure best response, thus $k \geq 2$. Then 
using Equation~\eqref{eq:xi} we get that 
$\frac{3}{4} < \frac{1}{2} \big( \alpha_i - \frac{\sum_{j=1}^k\alpha_j -2}{k}\big)$.
If we solve for $\alpha_j$ we get that $\alpha_i > \frac{3k-4}{2k-2} > 1$ which 
is a contradiction since when $\bq = \textbf{0}$ it holds that 
$\alpha_i = C^T_i\profx \leq 1$.
\qed
\end{proof}
A direct corollary from Lemma~\ref{lem:ymax} is that $\profy^{*^T}\profy^* \leq 5/8$.
Hence, we can prove the following lemma.

\begin{lemma}
\label{lem:col-reg}
Under strategy profile $(\profx^*, \profy^*)$ the regret for the column player is
bounded by $\frac{13}{8}(1-\delta)$.
\end{lemma}
\begin{proof}
Firstly, note that $T_c(\profx^*, \profy^*) = \delta\profx^{n^T}C\profy^* + 
(1-\delta)\profx^TC\profy^* - \profy^{*^T}\profy^*$. Moreover, 
$\max_{\tilde{\profy} \in \Delta} \{\profx^{n^T}C\tilde{\profy} - 
\tilde{\profy}^T\tilde{\profy}\}
- T_c(\profx^n, \profy^*) = 0$, since $\profy^*$ is a best response against 
$\profx^n$. Finally, notice that $0 \leq \profy^T\profy \leq 1$ for all \profy.
Thus, the regret for the column player is 
\begin{align*}
\regret^c(\profx^*, \profy^*) & = (1-\delta) \Big(\max_{\tilde{\profy} \in \Delta} \{\profx^TC\tilde{\profy} - 
\tilde{\profy}^T\tilde{\profy} \} - \profx^TC\profy^* + \profy^{*^T}\profy^* \Big)\\
& < (1-\delta) \big(1 + \frac{5}{8} \big).
\end{align*}
which matches the claimed result.
\qed
\end{proof}

If we combine Lemmas~\ref{lem:row-reg} and~\ref{lem:col-reg} and solve for 
$\delta$ we can see that the regrets are equal for $\delta = \frac{13}{21}$.
Thus, we get the following theorem for biased games where $\bq= \textbf{0}$.

\begin{theorem}
\label{thm:13}
The strategy profile $(\profx^*, \profy^*)$ is a $\frac{13}{21}$-equilibrium for 
biased games with $\bq = \textbf{0}$.
\end{theorem}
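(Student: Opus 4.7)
The plan is to combine the two regret bounds already established (Lemma~\ref{lem:row-reg} and Lemma~\ref{lem:col-reg}) and then optimize the convex combination parameter $\delta$ so as to equalize them. Since $(\profx^*,\profy^*)$ is an $\alpha$-equilibrium with $\alpha = \max\{\regret^r(\profx^*,\profy^*), \regret^c(\profx^*,\profy^*)\}$, it suffices to choose $\delta \in [0,1]$ that minimizes this maximum.

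Concretely, I would first invoke Lemma~\ref{lem:row-reg} to assert $\regret^r(\profx^*,\profy^*) \leq \delta$, which is increasing in $\delta$, and Lemma~\ref{lem:col-reg} to assert $\regret^c(\profx^*,\profy^*) \leq \tfrac{13}{8}(1-\delta)$, which is decreasing in $\delta$. The optimal choice therefore occurs at the point where the two bounds coincide, namely at the solution of
\begin{equation*}
\delta \;=\; \tfrac{13}{8}(1-\delta).
\end{equation*}
Solving this linear equation gives $8\delta = 13 - 13\delta$, whence $21\delta = 13$ and thus $\delta = 13/21$. Substituting back yields $\tfrac{13}{8}(1 - 13/21) = \tfrac{13}{8}\cdot\tfrac{8}{21} = 13/21$, confirming that both regret bounds equal $13/21$ at this value.

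I would then conclude by observing that under the base algorithm (with the uniform strategy $\profx^n$ in place of $\bp$) and the choice $\delta = 13/21$, every unilateral deviation of either player improves their utility by at most $13/21$, so $(\profx^*,\profy^*)$ is a $\tfrac{13}{21}$-equilibrium. There is essentially no obstacle here: the hard work has already been absorbed into Lemma~\ref{lem:row-reg} (which reuses the general bound from Lemma~\ref{lem:row_regret}) and Lemma~\ref{lem:col-reg} (which relied on the structural fact $\profy^{*^T}\profy^* \leq 5/8$ from Lemma~\ref{lem:ymax}). The only thing to check is that $13/21 \in [0,1]$, which is immediate, so that the definition $\profx^* = \delta\cdot\profx^n + (1-\delta)\cdot\profx$ is a valid mixed strategy.
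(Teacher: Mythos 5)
Your proposal is correct and follows essentially the same route as the paper: it combines Lemma~\ref{lem:row-reg} (regret $\leq \delta$ for the row player) and Lemma~\ref{lem:col-reg} (regret $\leq \tfrac{13}{8}(1-\delta)$ for the column player) and solves $\delta = \tfrac{13}{8}(1-\delta)$ to obtain $\delta = 13/21$. The paper states this in one sentence, and your write-up simply fills in the arithmetic and the (immediate) check that $13/21 \in [0,1]$.
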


\subsection{A 2/3-approximation for $L_\infty$-biased games}

Finally, we turn our attention to the $L_\infty$ penalty. We start by giving a
combinatorial algorithm for finding best responses. Similar to the best response
Algorithm for the $L_1$ penalty, the intuition is to start from the base
strategy \bp of the row player and shift probability from pure strategies with
low payoff to pure strategies with higher payoff. This time though, the shifted
probability will be distributed between the pure strategies with higher payoff. 

Without loss of generality assume that $R_1\profy \geq \ldots \geq R_n\profy$,
ie., that the strategies are ordered according to their payoff in the unbiased
bimatrix game.
The set of pure strategies of the row player can be partitioned into three 
disjoint sets according to the payoff they yield:
\begin{align*}
\calh & := \{i \in [n]: R_i\profy = R_1\profy \} \\
\calm & := \{i \in ([n]\setminus \calh):  R_1\profy - R_i\profy - d_r < 0\}\\
\call & := \{i \in [n]: R_1\profy - R_i\profy -d_r > 0\}.
\end{align*}

Next we giver an algorithm that computes a best response for 
$L_\infty$ penalty.

\alg{alg:linf}
\begin{tcolorbox}[title=Algorithm~\ref{alg:linf}. Best Response Algorithm for $L_\infty$ penalty]
\begin{enumerate}
\itemsep2mm
\item For all $i \in \call$, set $x_i = 0$.
\item If $ \calp \leq |\calh|\cdot p_{\max}$, then set 
$x_i= \bp_i + \frac{\calp}{|\calh|}$ for all $i \in \calh$ and $x_j = \bp_j$ for
$j \in \calm$.
\item Else if $ \calp < |\calh \cup \calm| \cdot p_{\max}$, then
\begin{itemize}
\item Set $x_i= \bp_i + \bmax$ for all $i \in \calh$.
\item Set $k= \lfloor\frac{\calp - |\calh|\cdot\bmax}{\bmax}\rfloor$.
\item Set $x_i= \bp_i + \bmax$ for all $i \leq |\calh|+k$.
\item Set $x_{|\calh|+k+1}= \bp_{|\calh|+k+1} + \calp - (|\calh|+k)\cdot\bmax$.
\item Set $x_j= \bp_j$ for all $|\calh|+k+2 \leq j \leq |\calh|+|\calm|$.
\end{itemize}
\item Else set $x_i = \bp_i + \frac{\calp}{|\calh \cup \calm|}$ for all $i \in \calh \cup \calm$.
\end{enumerate}
\end{tcolorbox}

Let $\bmax := \max_{i \in \call} \bp_i$ and let $\calp := \sum_{i \in \call}\bp_i$. 
Then for every best response the following lemma holds.
\begin{lemma}
\label{lem:linfbr}
If $\call \neq \emptyset$, then for any best response \profx of the row player 
against strategy \profy it holds that $\| \profx - \bp \|_\infty \geq \bmax$.
Else \bp is the best response.
\end{lemma}
\begin{proof}
Using similar arguments as in Lemma~\ref{lem:l1br}, it can be proven that if 
there are no pure strategies $i$ and $k$ such that $R_k\profy - R_i\profy -d_r < 0$
then any shifting of probability decreases the utility of the row player. Thus,
the best response of the player is \bp. On the other hand, if there are strategies
$i$ and $k$ such that $R_k\profy - R_i\profy -d_r > 0$, then the utility of 
the row player increase if all the probability from strategy $i$ is shifted to 
pure strategy $k$. The set \call contains all these pure strategies. Let 
$\jj \in \call$ be the pure strategy that defines \bmax. Then, all the \bmax 
probability can be shifted from \jj to the a pure strategy in \calh, i.e. a pure
strategy that yields the highest payoff, and strictly increase the utility of 
the player. Thus, the strategy \jj is played with zero probability and the claim 
follows.
\qed
\end{proof}

In what follows assume that $\call \neq \emptyset$, hence $\bmax > 0$. From 
Lemma~\ref{lem:linfbr} follows that there is a best response where the strategy 
with the highest payoff is played with probability $\bp_1 +\bmax$. Hence, it can 
be shifted up to \bmax probability from pure strategies with lower payoff to 
each pure strategy with higher payoff, starting from the second pure strategy etc.
After this shift of probabilities there will be a set of pure strategies that 
where each one is played with probability $\bp_i + \bmax$ and possibly one pure
strategy $j$ that is played with probability less or equal to $\bp_j$. The question 
is whether more probability should be shifted from the low payoff strategies to
strategies that yield higher payoff. The next lemma establishes that 
no pure strategy form \call is played with positive probability in any best 
response against \profy.

\begin{lemma}
\label{lem:nofromlow}
In every best response against strategy \profy all pure strategies $i \in \call$ 
are played with zero probability.
\end{lemma}
\begin{proof}
Let $K$ denote denote the set of pure strategies that are played with positive
probability after the first shifting of probabilities. Without loss of generality
assume that each strategy $i \in K$ is played with probability $\bp_i + \bmax$.
Then the utility of the player under this strategy is equal to 
$U = \sum_{i \in K}(\bp_i+\bmax)\cdot R_i\profy - d_r\cdot \bmax$. For the sake
of contradiction, assume that there is one strategy \jj from \call that belongs 
to $K$. Suppose that probability $\delta$ is shifted from the strategy \jj to 
the first pure strategy. Then the utility for the player is equal to 
$U +\delta(R_1\profy-R_{\jj}\profy - d_r) > U$, since by definition of \call 
$R_1\profy-R_{\jj}\profy - d_r > 0$. Thus, the utility of the player is 
increasing if probability is shifted. Notice that the analysis holds even if 
the penalty is $\bmax+\delta$ instead of $\bmax$, thus the claim follows.
\qed
\end{proof}

Thus, all the probability \calp from strategies from \call should be shifted to 
strategies yield higher payoff. The question now is what is the optimal way to 
distribute that probability over the strategies with the higher payoff. Clearly, 
the same amount of probability should be shifted in all strategies in \calh since
it makes the penalty smaller. Furthermore, it is easy to see that the maximum 
amount of probability is shifted to strategies in \calh. Next we prove that if
$\calp \geq \bmax \cdot(|\calh| + |\calm|)$ then \calp is uniformly distributed over
the pure strategies in $\calh \cup \calm$.
\begin{proof}
\label{lem:linfuniform}
If $\calp \geq \bmax \cdot(|\calh| + |\calm|)$ then there is a best response 
where the probability \calp is uniformly distributed over the pure strategies in 
$\calh \cup \calm$.
\end{proof}
\begin{proof}
Let $|\calh| + |\calm| = k$ and $S = \calp - k\cdot\bmax$. Let 
$$U = \sum_{i \in \calh \cup \calm}(\bp_i+\bmax+ \frac{S}{k})R_i\profy - d_r(\bmax+ \frac{S}{k}))$$
be the utility when the probability $S$ is distributed uniformly over all pure
strategies in $\calh \cup \calm$. Furthermore, let $U'$ be the utility when 
$\delta>0$ probability is shifted from a pure strategy $j$ to the first pure 
strategy that yields the highest payoff. 
Then $U' = U + \delta(R_1\profy - R_j\profy - d_r)$, but 
$R_1\profy - R_j\profy - d_r \leq 0$ since $j \in \calh \cup \calm$. The claim 
follows.
\qed
\end{proof}

Using the previous analysis the correctness of the algorithm follows.

Note that, using similar arguments as in Lemma~\ref{lem:l1br} the next lemma can 
be proved.
\begin{lemma}
\label{lem:linfdom}
If $d_r \geq 1$, then \bp is a dominant strategy.
\end{lemma}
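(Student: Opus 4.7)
The plan is to invoke Lemma~\ref{lem:linfbr} directly, since that lemma has already characterised when a deviation away from $\bp$ can be profitable. Recall the set
\[
\call = \{i \in [n] : R_1\profy - R_i\profy - d_r > 0\}.
\]
For any column strategy $\profy$ and any pure strategies $1$ and $i$, the assumption that bimatrix payoffs lie in $[0,1]$ gives $R_1\profy - R_i\profy \le 1$. If $d_r \ge 1$, then $R_1\profy - R_i\profy - d_r \le 0$ for every $i$, so $\call = \emptyset$ regardless of $\profy$. Lemma~\ref{lem:linfbr} then states that $\bp$ is the best response against $\profy$, and since $\profy$ was arbitrary, $\bp$ is best against every column strategy, hence dominant.

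To see \emph{why} $\call = \emptyset$ is the right condition, I would mirror the single-shift argument used in Lemma~\ref{lem:l1br}. Starting from the profile $\bp$, transferring $\delta > 0$ probability mass from a pure strategy $k$ (with $\bp_k > 0$) to a pure strategy $j$ changes the bimatrix term by $\delta(R_j\profy - R_k\profy)$ and raises the $L_\infty$ penalty by exactly $d_r\delta$, because the maximum coordinate deviation from $\bp$ after this single-pair shift is $\delta$. The net change in the row player's utility is therefore
\[
\delta\bigl(R_j\profy - R_k\profy - d_r\bigr) \le \delta(1 - d_r) \le 0,
\]
so no elementary shift out of $\bp$ is profitable. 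This is the exact analogue of the $L_1$ computation with the factor $2d_r$ replaced by $d_r$, which is why the dominance threshold moves from $d_r \ge 1/2$ in the $L_1$ case to $d_r \ge 1$ here.

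The one subtlety one must be mindful of is that under the $L_\infty$ penalty the cost does not add up across coordinates, so in principle the row player could try to deviate in several coordinates of $\bp$ simultaneously and benefit from the fact that the penalty only charges the maximum deviation. The main obstacle in writing a fully self-contained proof is ruling out such multi-coordinate deviations, but this has already been done within the structural analysis of best responses that established Lemma~\ref{lem:linfbr}: once $\call$ is empty, that lemma asserts $\bp$ itself is optimal, so there is no need to redo the multi-coordinate combinatorics here. Invoking it therefore delivers the result in one step, once the numerical bound $\call = \emptyset$ is noted.
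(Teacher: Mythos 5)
Your main argument — $d_r \ge 1$ forces $\call = \emptyset$ for every $\profy$, and Lemma~\ref{lem:linfbr} then says $\bp$ is the best response against $\profy$, hence dominant — is a clean and valid deduction, but it is a different route from the one the paper sketches. The paper's proof of Lemma~\ref{lem:linfdom} is the one-line remark that the lemma follows ``using similar arguments as in Lemma~\ref{lem:l1br}'', i.e.\ by re-running the single-pair probability-shift calculation with the factor $2d_r$ replaced by $d_r$ and concluding no shift is profitable; you instead invoke the already-stated structural characterisation of Lemma~\ref{lem:linfbr} and reduce the claim to a one-line numerical check on $\call$. Your route is arguably the more careful one, and you put your finger on exactly why: the $L_\infty$ penalty charges only the maximum coordinate deviation rather than the sum, so a single-pair shift computation does not by itself rule out multi-coordinate deviations from $\bp$, which is what the $L_1$ argument implicitly relies on. The one caveat worth noting is that you then lean on the claim that ``this has already been done within the structural analysis of best responses that established Lemma~\ref{lem:linfbr}.'' In fact the paper's own proof of Lemma~\ref{lem:linfbr} is likewise a pointer to ``similar arguments as in Lemma~\ref{lem:l1br}'', so the multi-coordinate combinatorics you correctly identify as the main obstacle is never actually written out anywhere in the section: to see that the gap is real, take $\bp=(0,0,\tfrac12,\tfrac12)$, $R_1\profy=R_2\profy=1$, $R_3\profy=R_4\profy=0$, $d_r=1$; then $\call=\emptyset$, yet $\profx=(\tfrac12,\tfrac12,0,0)$ yields $T_r=\tfrac12 > 0=T_r(\bp,\profy)$, because moving mass from two low strategies to two high strategies doubles the bimatrix gain while the $L_\infty$ penalty still charges only $\tfrac12$. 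So your derivation from Lemma~\ref{lem:linfbr} is formally sound, but your confidence that the structural lemma itself disposed of the subtlety is misplaced — the subtlety is genuine and, as the paper's proofs stand, unaddressed.
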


Furthermore, the combination of Lemma~\ref{lem:linfdom} with the fact that best 
responses can be computed in polynomial time gives the next theorem.
\begin{theorem}
\label{thm:linfeasy}
In biased games with $L_\infty$ penalty functions and 
$\max\{ d_r, d_c\} \geq 1$, an equilibrium can be computed in polynomial time.
\end{theorem}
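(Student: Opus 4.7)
The plan is to reduce the theorem directly to Lemma~\ref{lem:linfdom} together with Algorithm~\ref{alg:linf}. Without loss of generality suppose $d_r \geq 1$; the case $d_c \geq 1$ is symmetric since the algorithm and all arguments apply to either player. By Lemma~\ref{lem:linfdom}, the base strategy $\bp$ is a dominant strategy for the row player, meaning that for every column strategy $\profy$, the row player's utility $T_r(\profx,\profy)$ is maximised over $\profx$ by $\profx = \bp$. In particular, no matter what the column player does, the row player has no profitable deviation from $\bp$.

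Given this, to construct an equilibrium it suffices to let the row player play $\bp$ and then choose any column strategy $\profy^*$ that is a best response against $\bp$ in the $L_\infty$ biased game. I would compute $\profy^*$ by running Algorithm~\ref{alg:linf} on the column player's side, with $\bq$ as the base strategy and the rows of $C^T$ replacing those of $R$. This is well-defined because the structural analysis (Lemmas~\ref{lem:linfbr}--\ref{lem:linfuniform}) treating partitions $\calh,\calm,\call$ and the shifting of mass away from $\call$ is symmetric in the two players.

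By construction $(\bp, \profy^*)$ is then an exact equilibrium: the row player cannot improve by dominance, and the column player cannot improve because $\profy^*$ is a best response to $\bp$.

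It remains to verify polynomial running time. Algorithm~\ref{alg:linf} requires (i) computing the payoff vector $C^T \bp$, (ii) sorting its entries, (iii) determining the sets $\calh,\calm,\call$ by comparing $R_1\profy - R_i\profy$ to $d_r$ (here, the column-player analogue), and (iv) performing a linear-time case analysis to distribute the mass $\calp$ over $\calh \cup \calm$. All of these steps are clearly polynomial (indeed strongly polynomial) in $n$, so the full procedure outputs the equilibrium $(\bp,\profy^*)$ in polynomial time. There is no real obstacle beyond confirming the symmetry of Lemma~\ref{lem:linfdom} and Algorithm~\ref{alg:linf} between the two players, which is immediate from their statements.
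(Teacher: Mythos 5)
Your proposal is correct and matches the paper's (implicit) proof exactly: the paper states that Theorem~\ref{thm:linfeasy} follows from combining Lemma~\ref{lem:linfdom} (dominance of $\bp$ when $d_r \geq 1$) with the polynomial-time best-response computation of Algorithm~\ref{alg:linf}, which is precisely the construction you give.
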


Again we can see that there is a connection between the equilibria of the 
distance biased game and the well supported Nash equilibria (WSNE) of the 
underlying bimatrix game.
\begin{observation}
\label{obs:linf}
Let $\biased =\big( R,C, \bbb_r(\profx, \bp), \bbb_c(\profy, \bq), d_r, d_c \big)$  
be a distance biased game with $L_\infty$ penalties and let $d := \max \{d_r, d_c\}$.
Any equilirbium of \biased is a $d$-WSNE for the bimatrix game $(R, C)$.
\end{observation}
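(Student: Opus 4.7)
The plan is to mimic the proof of Theorem~\ref{obs:lone} (the $L_1$ analogue) but to exploit the fact that for $L_\infty$ penalties the critical gap in the best-response characterization is $d_r$ rather than $2d_r$, which is precisely why we obtain a $d$-WSNE instead of a $2d$-WSNE.

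Fix an equilibrium $(\profx^*, \profy^*)$ of \biased. Since $\profx^*$ is a best response for the row player against $\profy^*$, Lemma~\ref{lem:nofromlow} tells us that $\profx^*$ assigns zero probability to every pure strategy in $\call(\profy^*) = \{i : R_1\profy^* - R_i\profy^* - d_r > 0\}$ (where indices are ordered so that $R_1\profy^* = \max_j R_j\profy^*$). Consequently, for every index $i$ with $\profx^*_i > 0$ we have $i \in \calh(\profy^*) \cup \calm(\profy^*)$, which by the definition of these two sets is equivalent to
\begin{equation*}
R_i\profy^* \;\geq\; \max_{j \in [n]} R_j\profy^* - d_r \;\geq\; \max_{j \in [n]} R_j\profy^* - d.
\end{equation*}

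By the symmetric version of Lemma~\ref{lem:nofromlow} applied to the column player (interchanging the roles of rows and columns and of $R$ and $C^T$, with $d_r$ replaced by $d_c$), the analogous inequality holds for every pure strategy $i$ with $\profy^*_i > 0$, namely
\begin{equation*}
C^T_i\profx^* \;\geq\; \max_{j \in [n]} C^T_j\profx^* - d_c \;\geq\; \max_{j \in [n]} C^T_j\profx^* - d.
\end{equation*}

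Taken together, these two inequalities are precisely the definition of a $d$-WSNE for the bimatrix game $(R,C)$, so the claim follows. The only ingredient of real substance is Lemma~\ref{lem:nofromlow}; the rest is the observation that $\max\{d_r,d_c\} = d$, so there is no genuine obstacle — this is essentially a one-line corollary of the best-response structural result we have already proved.
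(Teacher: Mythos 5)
Your proof is correct and follows exactly the intended route: the paper leaves this Observation unproved because it is the direct $L_\infty$ analogue of Theorem~\ref{obs:lone}, and you supply that analogue, correctly replacing the $L_1$ best-response characterization (Lemma~\ref{lem:l1br}, with its $2d_r$ threshold) by Lemma~\ref{lem:nofromlow} (threshold $d_r$, since for $L_\infty$ the penalty increases by only $\delta$ rather than $2\delta$ when $\delta$ probability is shifted). Your contrapositive reading of Lemma~\ref{lem:nofromlow} and the symmetric application to the column player yield precisely the $d$-WSNE conditions, so nothing is missing.
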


\subsubsection{Approximation algorithm}

For the quality of approximation, we can reuse the results that we proved for
the $L_1$ penalty. Lemma~\ref{lem:row_regret} applies unchanged. For
Lemma~\ref{lem:col_l1inf}, we observe that $d_c\cdot \bbb_c(\profy^*,\bq) \leq
1$ when the penalty $\bbb_c(\profy^*,\bq)$ is the $L_\infty$ norm, since for
this case it holds $\|\profy^* - \bq\|_\infty \leq 1$ and it is assumed that
$d_c \leq 1$. Thus, we have the following theorem.

\begin{theorem}
\label{thm:linfapx}
In biased games with $L_\infty$ penalties a 2/3-equilibrium can be computed in
polynomial time.
\end{theorem}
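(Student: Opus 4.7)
The plan is to follow the template already used for the $L_1$ case in Theorem~\ref{thm:l1apx}, combined with the best-response machinery established for the $L_\infty$ penalty. First, I would dispense with the easy case: whenever $\max\{d_r,d_c\} \ge 1$, Theorem~\ref{thm:linfeasy} gives an exact equilibrium in polynomial time (which is in particular a $2/3$-equilibrium), so the remaining interesting case is $d_r, d_c < 1$. Under this restriction, I would run the base algorithm (Algorithm~\ref{alg:base}), using Algorithm~\ref{alg:linf} as the best-response subroutine in both steps; since Algorithm~\ref{alg:linf} is strongly polynomial, the overall procedure is strongly polynomial.

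Next I would bound the regrets of the output profile $(\profx^*, \profy^*)$. For the row player, Lemma~\ref{lem:row_regret} applies verbatim (its proof only uses structural properties of the mixture $\profx^* = \delta\bp + (1-\delta)\profx$ and the fact that $\profx$ is a best response against $\profy^*$), giving regret at most $\delta$. For the column player, I would invoke Lemma~\ref{lem:col_l1inf}, but I must first verify its precondition $d_c \cdot \bbb_c(\profy^*, \bq) \le 1$. This is the one place where the $L_\infty$ case differs from $L_1$: for any two probability distributions $\profy^*, \bq \in \Delta^n$, each coordinate lies in $[0,1]$, so $\|\profy^* - \bq\|_\infty \le 1$, and combined with $d_c < 1$ the precondition holds. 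Therefore Lemma~\ref{lem:col_l1inf} yields a column regret of at most $2 - 2\delta$.

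Finally, I would balance the two bounds by choosing $\delta = 2/3$, so that $\delta = 2 - 2\delta = 2/3$, which certifies that $(\profx^*, \profy^*)$ is a $2/3$-equilibrium.

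I don't expect any substantive obstacle: the structural work has been done by the $L_1$ analysis and by Algorithm~\ref{alg:linf}. The only subtlety is the verification of the precondition of Lemma~\ref{lem:col_l1inf}, but this is immediate from $\|\profy^* - \bq\|_\infty \le 1$ together with the assumption $d_c \le 1$, which is already explicitly noted just above the theorem statement.
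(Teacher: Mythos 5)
Your proposal is correct and follows essentially the same route as the paper: it invokes Lemma~\ref{lem:row_regret} for the row player, verifies the precondition $d_c\cdot \bbb_c(\profy^*,\bq)\le 1$ of Lemma~\ref{lem:col_l1inf} via $\|\profy^*-\bq\|_\infty\le 1$ and $d_c\le 1$, and balances at $\delta=2/3$.
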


\section{Conclusions}
\label{sec:conclusions}

We have studied games with infinite action spaces, and non-linear payoff
functions. We have shown that Lipschitz continuity of the payoff function can be
exploited to provide algorithms that find approximate equilibria. For Lipschitz
games, we showed that Lipschitz continuity of the payoff function allows us to
provide an efficient algorithm for finding approximate equilibria. For penalty
games, the Lipschitz continuity of the penalty function allows us to provide a
QPTAS. Finally, we provided strongly polynomial approximation algorithms for
$L_1$, $L_2^2$, and $L_\infty$ distance biased games.

Several open questions stem from our paper.  The most important one is to
understand the exact computational complexity of equilibrium computation in
Lipschitz and penalty games. Although Theorem~\ref{thm:nofptas} states that
there no FPTAS for penalty games, the result holds only for games with penalty
functions that depend on the size of the game and tend to zero as the size
grows. Another interesting feature is that we cannot verify efficiently in all
penalty games whether a given strategy profile is an equilibrium, and so it
seems questionable whether \PPAD can capture the full complexity of penalty
games. On the other side, for the distance biased games that we studied in this
paper, we have shown that we can decide in polynomial time if a strategy profile
is an equilibrium. Is the equilibrium computation problem \PPAD-complete for the
two classes of games we studied? Are there any subclasses of penalty games, e.g.
when the underlying normal form game is zero sum, that are easy to solve?

Another obvious direction is to derive better polynomial time approximation
guarantees under for biased games. We believe that the optimization approach
used by~\citet{TS} and~\citet{DFSS14} might tackle this problem. Under the $L_1$ penalties
the analysis of the steepest descent algorithm may be similar to~\citet{DFSS14}
and therefore we may be able to obtain a constant approximation guarantee
similar to the bound of $0.5$ that was established in that paper. The other
known techniques that compute approximate Nash equilibria~\cite{BBM10} and
approximate well supported Nash equilibria~\cite{KS,FGSS12,CDFFJS15} solve a
zero sum bimatrix game in order to derive the approximate equilibrium, and there
is no obvious way to generalise this approach in penalty games. 

\bibliographystyle{ACM-Reference-Format-Journals}
\bibliography{references}


\begin{thebibliography}{00}


\ifx \showCODEN    \undefined \def \showCODEN     #1{\unskip}     \fi
\ifx \showDOI      \undefined \def \showDOI       #1{{\tt DOI:}\penalty0{#1}\ }
  \fi
\ifx \showISBNx    \undefined \def \showISBNx     #1{\unskip}     \fi
\ifx \showISBNxiii \undefined \def \showISBNxiii  #1{\unskip}     \fi
\ifx \showISSN     \undefined \def \showISSN      #1{\unskip}     \fi
\ifx \showLCCN     \undefined \def \showLCCN      #1{\unskip}     \fi
\ifx \shownote     \undefined \def \shownote      #1{#1}          \fi
\ifx \showarticletitle \undefined \def \showarticletitle #1{#1}   \fi
\ifx \showURL      \undefined \def \showURL       #1{#1}          \fi

\bibitem[\protect\citeauthoryear{Azrieli and Shmaya}{Azrieli and
  Shmaya}{2013}]%
        {AS13}
{Yaron Azrieli} {and} {Eran Shmaya}. 2013.
\newblock \showarticletitle{Lipschitz Games}.
\newblock {\em Math. Oper. Res.\/} {38}, 2 (2013), 350--357.
\newblock


\bibitem[\protect\citeauthoryear{Babichenko}{Babichenko}{2013}]%
        {Bab13}
{Yakov Babichenko}. 2013.
\newblock \showarticletitle{Best-reply dynamics in large binary-choice
  anonymous games}.
\newblock {\em Games and Economic Behavior\/}  {81} (2013), 130--144.
\newblock


\bibitem[\protect\citeauthoryear{Barman}{Barman}{2015}]%
        {B15}
{Siddharth Barman}. 2015.
\newblock \showarticletitle{Approximating Nash Equilibria and Dense Bipartite
  Subgraphs via an Approximate Version of Caratheodory's Theorem}. In {\em
  Proc. of {STOC} 2015}. 361--369.
\newblock


\bibitem[\protect\citeauthoryear{Bosse, Byrka, and Markakis}{Bosse
  et~al\mbox{.}}{2010}]%
        {BBM10}
{H. Bosse}, {J. Byrka}, {and} {E. Markakis}. 2010.
\newblock \showarticletitle{New algorithms for approximate {N}ash equilibria in
  bimatrix games}.
\newblock {\em Theoretical Computer Science\/} {411}, 1 (2010), 164--173.
\newblock


\bibitem[\protect\citeauthoryear{Caragiannis, Kurokawa, and
  Procaccia}{Caragiannis et~al\mbox{.}}{2014}]%
        {CKP}
{Ioannis Caragiannis}, {David Kurokawa}, {and} {Ariel~D. Procaccia}. 2014.
\newblock \showarticletitle{Biased Games}. In {\em Proc. of {AAAI} 2014}.
  609--615.
\newblock


\bibitem[\protect\citeauthoryear{Chapman and Johnson}{Chapman and
  Johnson}{1999}]%
        {Chapman99}
{Gretchen~B. Chapman} {and} {Eric~J. Johnson}. 1999.
\newblock \showarticletitle{Anchoring, Activation, and the Construction of
  Values}.
\newblock {\em Organizational Behavior and Human Decision Processes\/} {79}, 2
  (1999), 115 -- 153.
\newblock
\showISSN{0749-5978}


\bibitem[\protect\citeauthoryear{Chen, Deng, and Teng}{Chen
  et~al\mbox{.}}{2009}]%
        {CDT}
{Xi Chen}, {Xiaotie Deng}, {and} {Shang-Hua Teng}. 2009.
\newblock \showarticletitle{Settling the complexity of computing two-player
  {N}ash equilibria}.
\newblock {\it J. ACM} {56}, 3 (2009), 14:1--14:57.
\newblock


\bibitem[\protect\citeauthoryear{Chen, Durfee, and Orfanou}{Chen
  et~al\mbox{.}}{2015}]%
        {CDO15}
{Xi Chen}, {David Durfee}, {and} {Anthi Orfanou}. 2015.
\newblock \showarticletitle{On the Complexity of Nash Equilibria in Anonymous
  Games}. In {\em Proc. {STOC}}. 381--390.
\newblock


\bibitem[\protect\citeauthoryear{Czumaj, Deligkas, Fasoulakis, Fearnley,
  Jurdzinski, and Savani}{Czumaj et~al\mbox{.}}{2015}]%
        {CDFFJS15}
{Artur Czumaj}, {Argyrios Deligkas}, {Michail Fasoulakis}, {John Fearnley},
  {Marcin Jurdzinski}, {and} {Rahul Savani}. 2015.
\newblock \showarticletitle{Distributed Methods for Computing Approximate
  Equilibria}.
\newblock  (2015).
\newblock


\bibitem[\protect\citeauthoryear{Daskalakis, Goldberg, and
  Papadimitriou}{Daskalakis et~al\mbox{.}}{2009}]%
        {DGP}
{Constantinos Daskalakis}, {Paul~W. Goldberg}, {and} {Christos~H.
  Papadimitriou}. 2009.
\newblock \showarticletitle{The Complexity of Computing a {N}ash Equilibrium}.
\newblock {\it SIAM J. Comput.} {39}, 1 (2009), 195--259.
\newblock


\bibitem[\protect\citeauthoryear{Daskalakis, Mehta, and
  Papadimitriou}{Daskalakis et~al\mbox{.}}{2007}]%
        {DMP07}
{Constantinos Daskalakis}, {Aranyak Mehta}, {and} {Christos~H. Papadimitriou}.
  2007.
\newblock \showarticletitle{Progress in approximate {N}ash equilibria}. In {\em
  Proc.\ of EC}. 355--358.
\newblock


\bibitem[\protect\citeauthoryear{Daskalakis, Mehta, and
  Papadimitriou}{Daskalakis et~al\mbox{.}}{2009}]%
        {DMP}
{Constantinos Daskalakis}, {Aranyak Mehta}, {and} {Christos~H. Papadimitriou}.
  2009.
\newblock \showarticletitle{A note on approximate {N}ash equilibria}.
\newblock {\em Theoretical Computer Science\/} {410}, 17 (2009), 1581--1588.
\newblock


\bibitem[\protect\citeauthoryear{Daskalakis and Papadimitriou}{Daskalakis and
  Papadimitriou}{2014}]%
        {DP14}
{Constantinos Daskalakis} {and} {Christos~H. Papadimitriou}. 2014.
\newblock \showarticletitle{Approximate {N}ash equilibria in anonymous games}.
\newblock {\em Journal of Economic Theory\/} (2014).
\newblock
\showISSN{0022-0531}
\newblock
\shownote{To appear.}


\bibitem[\protect\citeauthoryear{Deb and Kalai}{Deb and Kalai}{2015}]%
        {DK15}
{Joyee Deb} {and} {Ehud Kalai}. 2015.
\newblock \showarticletitle{{Stability in large Bayesian games with
  heterogeneous players}}.
\newblock {\em Journal of Economic Theory\/} {157}, C (2015), 1041--1055.
\newblock


\bibitem[\protect\citeauthoryear{Deligkas, Fearnley, Savani, and
  Spirakis}{Deligkas et~al\mbox{.}}{2015}]%
        {DFSS14}
{Argyrios Deligkas}, {John Fearnley}, {Rahul Savani}, {and} {Paul Spirakis}.
  2015.
\newblock \showarticletitle{Computing Approximate {N}ash Equilibria in
  Polymatrix Games}. In {\em Algorithmica}.
\newblock
\newblock
\shownote{To appear.}


\bibitem[\protect\citeauthoryear{Fearnley, Goldberg, Savani, and
  S{\o}rensen}{Fearnley et~al\mbox{.}}{2012}]%
        {FGSS12}
{John Fearnley}, {Paul~W. Goldberg}, {Rahul Savani}, {and} {Troels~Bjerre
  S{\o}rensen}. 2012.
\newblock \showarticletitle{Approximate Well-Supported {N}ash Equilibria Below
  Two-Thirds}. In {\em SAGT}. 108--119.
\newblock


\bibitem[\protect\citeauthoryear{Fiat and Papadimitriou}{Fiat and
  Papadimitriou}{2010}]%
        {FP10}
{Amos Fiat} {and} {Christos~H. Papadimitriou}. 2010.
\newblock \showarticletitle{When the Players Are Not Expectation Maximizers}.
  In {\em Algorithmic Game Theory - Third International Symposium, {SAGT} 2010,
  Athens, Greece, October 18-20, 2010. Proceedings}. 1--14.
\newblock


\bibitem[\protect\citeauthoryear{Kahneman}{Kahneman}{1992}]%
        {Kahneman92}
{Daniel Kahneman}. 1992.
\newblock \showarticletitle{Reference points, anchors, norms, and mixed
  feelings}.
\newblock {\em Organizational Behavior and Human Decision Processes\/} {51}, 2
  (1992), 296--312.
\newblock


\bibitem[\protect\citeauthoryear{Kontogiannis and Spirakis}{Kontogiannis and
  Spirakis}{2010}]%
        {KS}
{Spyros~C. Kontogiannis} {and} {Paul~G. Spirakis}. 2010.
\newblock \showarticletitle{Well Supported Approximate Equilibria in Bimatrix
  Games}.
\newblock {\em Algorithmica\/} {57}, 4 (2010), 653--667.
\newblock


\bibitem[\protect\citeauthoryear{Kozlov, Tarasov, and Khachiyan}{Kozlov
  et~al\mbox{.}}{1980}]%
        {Kozlov80}
{M.K. Kozlov}, {S.P. Tarasov}, {and} {L.G. Khachiyan}. 1980.
\newblock \showarticletitle{The polynomial solvability of convex quadratic
  programming}.
\newblock {\em \{USSR\} Computational Mathematics and Mathematical Physics\/}
  {20}, 5 (1980), 223 -- 228.
\newblock
\showISSN{0041-5553}


\bibitem[\protect\citeauthoryear{Lipton, Markakis, and Mehta}{Lipton
  et~al\mbox{.}}{2003}]%
        {LMM}
{Richard~J. Lipton}, {Evangelos Markakis}, {and} {Aranyak Mehta}. 2003.
\newblock \showarticletitle{Playing large games using simple strategies}. In
  {\em EC}. 36--41.
\newblock


\bibitem[\protect\citeauthoryear{Mavronicolas and Monien}{Mavronicolas and
  Monien}{2015}]%
        {MM15}
{Marios Mavronicolas} {and} {Buckhard Monien}. 2015.
\newblock \showarticletitle{The Complexity of Equilibria for Risk-Modeling
  Valuations}.
\newblock {\em CoRR\/}  {abs/1510.08980} (2015).
\newblock


\bibitem[\protect\citeauthoryear{Nash}{Nash}{1951}]%
        {N}
{John Nash}. 1951.
\newblock \showarticletitle{Non-Cooperative Games}.
\newblock {\em The Annals of Mathematics\/} {54}, 2 (1951), 286--295.
\newblock


\bibitem[\protect\citeauthoryear{Rosen}{Rosen}{1965}]%
        {Rosen65}
{J.~B. Rosen}. 1965.
\newblock \showarticletitle{Existence and Uniqueness of Equilibrium Points for
  Concave N-Person Games}.
\newblock {\em Econometrica\/} {33}, 3 (1965), pp. 520--534.
\newblock
\showISSN{00129682}


\bibitem[\protect\citeauthoryear{Tsaknakis and Spirakis}{Tsaknakis and
  Spirakis}{2008}]%
        {TS}
{Haralampos Tsaknakis} {and} {Paul~G. Spirakis}. 2008.
\newblock \showarticletitle{An Optimization Approach for Approximate {N}ash
  Equilibria}.
\newblock {\em Internet Mathematics\/} {5}, 4 (2008), 365--382.
\newblock


\bibitem[\protect\citeauthoryear{Tversky and Kahneman}{Tversky and
  Kahneman}{1974}]%
        {TK74}
{Amos Tversky} {and} {Daniel Kahneman}. 1974.
\newblock \showarticletitle{Judgment under Uncertainty: Heuristics and Biases}.
\newblock {\em Science\/} {185}, 4157 (1974), 1124--1131.
\newblock


\end{thebibliography}

\newpage 
\appendix

\section{Proof that $\profy^{*^T}\profy^* - 2\profy_k^*\bq_k \leq 1 - 2\bq_k$.}
\label{app:proof}

%
\begin{proof}
Notice from~\eqref{eq:xi} that for all $i$ we get 
$\profy_i = \profy_k + \frac{1}{2}(\alpha_i - \alpha_k)$. Using that we can 
write the term $\profy^T\profy = \sum_i \profy_i^2$ as follows for a when \profy
has support size $s$
\begin{align*}
\sum_{i=1}^s \profy_i^2 & = \profy_i^2 + \sum_{i\neq k} \profy_i^2 \\
& = \profy_k^2 + \sum_{i\neq k} \left(\profy_k +\frac{1}{2}(\alpha_i - \alpha_k)\right)^2\\
& = s\profy^2_k + \Big(\sum_{i \neq k}(\alpha_i - \alpha_k)\Big)\profy_k 
+ \frac{1}{4} \sum_{i \neq k}(\alpha_k - \alpha_i)^2.
\end{align*}

Then we can see that $\profy^{*^T}\profy - 2\profy^{*^T}_k\bq_k$ is increasing 
as $\profy^*_k$ increases, since we know from Lemma~\ref{lem:pos-prob} that 
$\profy^*_k > 0$. This becomes clear if we take the partial derivative of 
$\profy^{*^T}\profy^* - 2\profy_k^*\bq_k$ with respect to $\profy^*_k$ which is equal to 
\begin{align*}
2s\profy^*_k + \sum_{i \neq k}(\alpha_i - \alpha_k) - 2\bq_k & = 
2s\profy^*_k + \sum_{i \neq k}2(\profy^*_i - \profy^*_k) - 2\bq_k \quad 
\text{\big(since $\profy_i = \profy_k + \frac{1}{2}(\alpha_i - \alpha_k)$\big)} \\
& = 2s\profy^*_k + 2\sum_{i \neq k}\profy^*_i - 2(s-1)\profy^*_k - 2\bq_k \\
& = 2\sum_{i=1}^s\profy^*_i - 2\bq_k \\
& = 2 - 2\bq_k \\
& \geq 0 \quad \text{(since $\profy^*_k>0$)}.
\end{align*}
Thus, the value of $\profy^{*^T}\profy^* - 2\profy_k^*\bq_k$ is maximized when 
$\profy_k^*=1$ and our claim follows.
\qed
\end{proof}

\end{document}